\newcommand{\algoname}[1]{\textnormal{\textsc{#1}}}
\newtheorem{theorem}{Theorem}
\newtheorem{lemma}[theorem]{Lemma}
\theoremstyle{definition}
\newtheorem{definition}{Definition}[section]
\newtheorem{claim}[theorem]{Claim}
\global\long\def\RR{\mathbb{R}}
\newcommand{\EE}{\mathbb{E}}
\icmltitlerunning{Near Input Sparsity Time Kernel Embeddings via Adaptive Sampling}
\begin{document}
	
	\twocolumn[
	\icmltitle{Near Input Sparsity Time Kernel Embeddings via Adaptive Sampling}
	
	
	
	\icmlsetsymbol{equal}{*}
	\begin{icmlauthorlist}
		\icmlauthor{David P. Woodruff}{equal,to}
		\icmlauthor{Amir Zandieh}{equal,do}
	\end{icmlauthorlist}
	
	\icmlaffiliation{to}{Carnegie Mellon University, USA}
	\icmlaffiliation{do}{Ecole polytechnique federale de Lausanne, Switzerland. Part of this
		work was done while the author was visiting CMU}
	
	\icmlcorrespondingauthor{David P. Woodruff}{dwoodruf@cs.cmu.edu}
	\icmlcorrespondingauthor{Amir Zandieh}{amir.zandieh@epfl.ch}

	\icmlkeywords{Machine Learning, ICML}
	
	\vskip 0.3in
	]
	
	
	
	\printAffiliationsAndNotice{\icmlEqualContribution} 
	
	\begin{abstract}
		To accelerate kernel methods, we propose a near input sparsity time algorithm for sampling the high-dimensional feature space implicitly defined by a kernel transformation. Our main contribution is an importance sampling method for subsampling the feature space of a degree $q$ tensoring of data points in almost input sparsity time, improving the recent oblivious sketching method of (Ahle et al., 2020) by a factor of $q^{5/2}/\epsilon^2$. This leads to a subspace embedding for the polynomial kernel, as well as the Gaussian kernel, with a target dimension that is only linearly dependent on the statistical dimension of the kernel and in time which is only linearly dependent on the sparsity of the input dataset. We show how our subspace embedding bounds imply new statistical guarantees for kernel ridge regression. Furthermore, we empirically show that in large-scale regression tasks, our algorithm outperforms state-of-the-art kernel approximation methods.
	\end{abstract}
	
	\section{Introduction}
	\label{intro}
	Kernel methods provide a simple, yet powerful framework for applying non-parametric modeling techniques to a number of important problems in statistics and machine learning, such as kernel ridge regression, SVM, PCA, CCA, etc. While kernel methods are statistically well understood and perform well empirically, they often pose scalability challenges as they operate on the kernel matrix (Gram matrix) of the data, whose size scales quadratically in the size of the training dataset. Primitives such as kernel PCA or kernel ridge regression generally take a prohibitively large quadratic amount of space and at least quadratic time. Thus, much work has focused on scaling up kernel methods by producing compressed and \emph{low-rank} approximations to kernel matrices \cite{rahimi2008random, alaoui2015fast, avron2017faster, musco2017recursive, avron2017random,avron2014subspace, ahle2019oblivious, pmlr-v108-zandieh20a}.
	
	\subsection{Problem Definition}
	\label{sec:pro-def}
	For a given kernel function $k:\RR^d\times \RR^d \rightarrow \RR$ and a dataset of $d$-dimensional vectors $x_1, x_2, \cdots x_n \in \RR^d$, let $K \in \RR^{n \times n}$ be the kernel matrix corresponding to this dataset defined as $K_{ij}=k(x_i,x_j)$ for every $i,j\in [n]$. A classical solution for scaling up kernel methods is via kernel \emph{low-rank approximation}, where one seeks to find a low-rank matrix $Z\in \RR^{s\times n}$ such that $Z^\top Z$ can serve as a proxy to the kernel matrix $K$. In order to obtain statistical and algorithmic guarantees for downstream kernel-based learning applications, such as kernel regression, PCR, CCA, etc., one needs to have spectral approximation bounds on the entire surrogate kernel matrix. Formally, for given $\epsilon, \lambda> 0$ we need $Z^\top Z$ to be an \emph{$(\epsilon,\lambda)$-spectral approximation} to the kernel matrix $K$, meaning that $Z^\top Z$ has to satisfy,
	\begin{equation}\label{spectral-bound}
	\frac{K+\lambda I}{1+\epsilon} \preceq Z^\top Z+\lambda I \preceq \frac{K+\lambda I}{1-\epsilon}.
	\end{equation}
	
	Intuitively, if $\lambda$ is much larger than the operator norm of $K$ then $Z=0$ is a good solution that satisfies \eqref{spectral-bound}. On the other hand if $\lambda=0$, then the target dimension $s$ has to be at least equal to the rank of $K$. In general, the \emph{statistical dimension} (or \emph{effective dimension}) captures this tradeoff, defined as $s_\lambda :=\sum_{i=1}^n\frac{\lambda_i}{\lambda_i+\lambda}$,
	where the $\lambda_i$ are the eigenvalues of $K$.
	The goal is to find a matrix $Z \in \RR^{s\times n}$ with a target dimension $s$ which depends only linearly on $s_\lambda$, using a runtime that is nearly equal to the number of non-zero entries (i.e., the sparsity) of the input dataset, denoted by $\text{nnz}(X)$. The main motivation of this paper is the following:
	
	{\bf P :}\emph{ Given a dataset $x_1, x_2, \cdots x_n \in \RR^d$, and a kernel function $k(\cdot)$, if $K$ is the kernel matrix corresponding to this dataset with statistical dimension $s_\lambda={\bf tr}\left(K(K+\lambda I)^{-1}\right)$, can we compute a matrix $Z \in \RR^{s \times n}$ with $s = O\left( \frac{s_\lambda}{\epsilon^2} \log n\right)$, using $O\left( \text{poly}(s_\lambda,\frac{1}{\epsilon}, \log n) \cdot n + \text{poly}( \log n) \cdot \text{nnz}(X) \right)$ runtime, such that $Z^\top Z$ is an $(\epsilon,\lambda)$-spectral approximation to $K$ as per \eqref{spectral-bound}?}\\
	The runtime that {\bf (P)} is asking for requires the  $\text{poly}(s_\lambda,\epsilon^{-1})$ terms to be decoupled from the input sparsity, $\text{nnz}(X)$. Hence, up to low order terms, we aim for a runtime which only depends linearly on the sparsity of the input dataset. 
	
	We address {\bf (P)} for two important kernel classes: the degree-$q$ polynomial kernel $k(x,y)=\langle x,y\rangle^q$ for some $q\in \mathbb{Z}_+$, and the Gaussian kernel $k(x,y)=e^{-\|x-y\|_2^2/2}$. We also remark that, as we will later discuss in Section~\ref{sec:dot-prod}, our method is very general and can be applied to the class of dot-product kernels. As we will discuss in the related work section, all prior methods for approximating the polynomial kernel achieve a runtime of either the form $\text{poly}(\epsilon^{-1}, q, \log n) \cdot \text{nnz}(X)$ or $\text{poly}(s_\lambda, \epsilon^{-1}, \log n) \cdot \text{nnz}(X)$, and similarly all prior results for the Gaussian kernel achieve a runtime of either $\text{poly}(\epsilon^{-1}, r, \log n) \cdot \text{nnz}(X)$ or $\text{poly}(s_\lambda, \epsilon^{-1}, \log n) \cdot \text{nnz}(X)$, where $r$ is the radius of the input dataset. These are strictly worse than the target runtime of {\bf (P)}.
	
	\subsection{Our Results}
	We answer problem {\bf (P)} in the affirmative by designing near input sparsity time algorithms for embedding the polynomial and Gaussian kernels. Our main result for the  polynomial kernel is given in the following theorem.
	
	\begin{theorem}\label{thm:poly}
		For any dataset $x_1, \cdots x_n \in \RR^d$, any $\epsilon,\lambda>0$ and any positive integer $q$, if $K \in \RR^{n \times n}$ is the degree-$q$ polynomial kernel matrix corresponding to this dataset ($K_{i,j} := \langle x_i, x_j \rangle^q$) with statistical dimension $s_\lambda$ and $\frac{{\bf tr}(K)}{\epsilon\lambda} = O(\text{poly}(n))$, then there exists an algorithm that computes a matrix $Z \in \RR^{s \times n}$, with target dimension $s = O\left(\frac{s_\lambda}{\epsilon^2} \log n\right)$ such that, with high probability, $Z^\top Z$ is an $(\epsilon,\lambda)$-spectral approximation to $K$ as in \eqref{spectral-bound} using $O\left( {\text{poly}(\epsilon^{-1},q,\log n) \cdot s_\lambda^2n} + q^{5/2} \log^4n \cdot \text{nnz}(X)\right)$ time.
	\end{theorem}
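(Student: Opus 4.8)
Write $X\in\RR^{d\times n}$ for the matrix with columns $x_1,\dots,x_n$ and let $\Phi:=X^{\otimes q}\in\RR^{d^q\times n}$ be its $q$-fold column-wise tensor (Khatri--Rao) power, so $\Phi^\top\Phi=K$. The plan is to produce a row-sampling-and-rescaling operator $S$ for which $Z:=S\Phi$ has $Z^\top Z$ an $(\epsilon,\lambda)$-spectral approximation of $K$; this is exactly a $\lambda$-ridge subspace embedding of $\Phi$. By the standard matrix Chernoff bound for ridge leverage score sampling it suffices to draw $s=O(s_\lambda\epsilon^{-2}\log n)$ rows of $\Phi$ from a distribution whose probabilities dominate, up to a constant factor, the $\lambda$-ridge leverage scores $\tau_a=a^\top(K+\lambda I)^{-1}a$ of the rows $a^\top$ of $\Phi$ (so that $\sum_a\tau_a=s_\lambda$), rescaling each sampled row by $1/\sqrt{p_a}$. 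The entire task is therefore to implement such a sampler without ever materializing the $d^q$ rows of $\Phi$, and within the claimed runtime.

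To do this I would build $x^{\otimes q}$ recursively along a balanced binary tree $T$ of depth $\lceil\log_2 q\rceil$ whose $q$ leaves each carry a copy of $X$, associating to an internal node $v$ the Khatri--Rao product $\Phi_v$ of the leaf copies below it, so $\Phi_{\mathrm{root}}=\Phi$ and $\Phi_{\mathrm{leaf}}=X$. The invariant I would maintain is that every node stores a matrix $Z_v$ with only $m=\mathrm{poly}(s_\lambda,\epsilon^{-1},q,\log n)$ rows such that $Z_v^\top Z_v$ is an $(\epsilon_v,\lambda_v)$-spectral approximation of $\Phi_v^\top\Phi_v$ for a suitably scaled error $\epsilon_v$ and regularizer $\lambda_v$. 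At the leaves, $Z_{\mathrm{leaf}}$ would be obtained by applying a sparse oblivious subspace embedding (an \algoname{OSNAP}/sparse Johnson--Lindenstrauss map) to $X$; each such product costs $\widetilde O(\mathrm{nnz}(X))$ and there are $O(q)$ leaves. To process an internal node $v$ with children $\ell,r$, I would first form $Z_\ell\odot Z_r$, whose Gram matrix is the Schur (Hadamard) product $(Z_\ell^\top Z_\ell)\circ(Z_r^\top Z_r)$; since the Schur product of PSD matrices is PSD and Schur multiplication is linear, substituting $Z_\ell^\top Z_\ell\approx\Phi_\ell^\top\Phi_\ell$ and then $Z_r^\top Z_r\approx\Phi_r^\top\Phi_r$ one factor at a time shows $Z_\ell\odot Z_r$ spectrally approximates $\Phi_\ell\odot\Phi_r=\Phi_v$. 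This matrix has only $m^2=\mathrm{poly}(\ldots)$ rows and $n$ columns, so its ridge leverage scores can be estimated in $\mathrm{poly}(s_\lambda,\epsilon^{-1},q,\log n)\cdot n$ time (e.g. via a fast sketch of $Z_\ell\odot Z_r$), after which sampling $\widetilde O(m)$ reweighted rows yields $Z_v$ and re-establishes the invariant.

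For correctness I would compose the $O(q)$ spectral-approximation steps: taking $\epsilon_v=\Theta(\epsilon/\log q)$ at each of the $\lceil\log_2 q\rceil$ levels and invoking transitivity of $(\cdot,\lambda)$-spectral approximation makes the errors telescope to $(1\pm\epsilon)$ at the root, while the hypothesis $\mathrm{tr}(K)/(\epsilon\lambda)=\mathrm{poly}(n)$ keeps all failure-probability logarithms $O(\log n)$ and the intermediate statistical dimensions $\mathrm{poly}(s_\lambda)$, so $m$ is as claimed. The hard part will be the interplay of the Khatri--Rao step with the regularizer: $\Phi_v^\top\Phi_v+\lambda I$ does not factor through the Schur product, so naively propagating an $(\epsilon_v,\lambda)$-bound would force the regularizer to decay geometrically down the tree and would blow up the intermediate effective dimensions. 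I would resolve this by asking of each intermediate $Z_v$ only the weaker property that sampling by its leverage scores overestimates the \emph{global} $\lambda$-ridge leverage scores of $\Phi_v$ (with the level-$j$ regularizer $\lambda_v$ chosen accordingly), and by proving a concentration lemma for leverage-score sampling from a Khatri--Rao product in which the scores are only \emph{estimated} through the two child sketches --- this shows the composed estimates still dominate the true scores up to the accumulated constant, which is all the matrix Chernoff bound of the first paragraph needs.

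Finally, the runtime: the $O(q)$ leaf embeddings contribute $q\cdot\widetilde O(\mathrm{nnz}(X))$, and tuning the sparse map so that a single level of tensoring preserves the embedding --- target dimension $\widetilde O(q^2)$ with $\widetilde O(q^{1/2})$ nonzeros per column, which is where the $q^{5/2}\log^4 n$ factor comes from --- makes this $q^{5/2}\log^4 n\cdot\mathrm{nnz}(X)$; the $O(q)$ internal-node computations plus a final down-sampling to $s=O(s_\lambda\epsilon^{-2}\log n)$ rows each touch only $\mathrm{poly}(s_\lambda,\epsilon^{-1},q,\log n)\times n$ matrices and together contribute $\mathrm{poly}(\epsilon^{-1},q,\log n)\cdot s_\lambda^2 n$. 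Returning $Z:=Z_{\mathrm{root}}$ then gives the theorem.
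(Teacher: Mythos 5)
Your tree-based sketch-and-sample scheme is a genuinely different route from the paper's, but it has a real gap, and it sits exactly where you flag it. The step ``substituting $Z_\ell^\top Z_\ell\approx\Phi_\ell^\top\Phi_\ell$ and then $Z_r^\top Z_r\approx\Phi_r^\top\Phi_r$ one factor at a time'' is only sound for \emph{unregularized} multiplicative approximation. With the additive regularizer, the child guarantee gives $Z_\ell^\top Z_\ell=\Phi_\ell^\top\Phi_\ell+E$ with $\pm E\preceq\epsilon_v(\Phi_\ell^\top\Phi_\ell+\lambda_\ell I)$, and Schur-multiplying the error by the other child's Gram produces, via the Schur product theorem, a term bounded only by $\epsilon_v\bigl[(\Phi_\ell^\top\Phi_\ell)\circ(Z_r^\top Z_r)+\lambda_\ell\,\mathrm{diag}_i\bigl(\|[Z_r]_{\star,i}\|_2^2\bigr)\bigr]$. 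The diagonal piece scales like $\lambda_\ell\max_i\|\phi_r(x_i)\|_2^2$, not like $\lambda_v$, and a PSD matrix does not dominate its own diagonal, so this term is not absorbed by $O(\epsilon_v)(\Phi_v^\top\Phi_v+\lambda_v I)$; propagating it forces exactly the level-dependent rescaling of the regularizer you worry about. Your proposed repair --- requiring only that the leverage scores of $Z_\ell\odot Z_r$ (at some level regularizer $\lambda_v$) \emph{dominate} the global $\lambda$-ridge leverage scores of $\Phi_v$, with bounded oversampling so that the intermediate widths $m$ stay $\mathrm{poly}(s_\lambda,q,\epsilon^{-1},\log n)$ --- is precisely the technical heart of such an argument, and it is left as an unproven ``concentration lemma.'' In particular, the claim that the intermediate statistical dimensions are $\mathrm{poly}(s_\lambda)$ is unsubstantiated: the effective dimension of a partial power $X^{\otimes q/2}$ at regularizer $\lambda_v$ has no a priori relation to $s_\lambda(K)$, and the hypothesis ${\bf tr}(K)/(\epsilon\lambda)=\mathrm{poly}(n)$ controls only the number of rounds of a $\lambda$-halving recursion, not these dimensions. (A secondary, fixable slip: errors compound over all $q$ leaf factors, not over the $\log_2 q$ levels, so you need $\epsilon_v=\Theta(\epsilon/q)$; this only affects the $\mathrm{poly}(q,\epsilon^{-1})\cdot n$ term.)

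For contrast, the paper never maintains regularized spectral approximations of intermediate tensor powers, which is how it sidesteps this issue. Its outer loop (Algorithm~\ref{alg:outerloop}, Lemma~\ref{resursive-rlss-lem}) recurses on the \emph{regularizer}, halving from $\lambda_0=\|\Phi\|_F^2/\epsilon$ down to $\lambda$ in $O(\log n)$ rounds --- this is where ${\bf tr}(K)/(\epsilon\lambda)=\mathrm{poly}(n)$ enters --- and each round only requires a constant-factor row norm sampler (Definition~\ref{def:row-samp}) for $X^{\otimes q}(B^\top B+\lambda' I)^{-1/2}$, where $B$ is the previous round's small sample. That sampler (Algorithm~\ref{alg:rotatedrowsampler-poly}, Lemma~\ref{lem:rotatedrowsampler-poly}) draws the indices $i_1,\dots,i_q$ of a single row of $X^{\otimes q}$ sequentially, by reshaping $X^{\otimes q}v$ into $X\cdot\mathrm{diag}(v)\cdot(X^{\otimes(q-1)})^\top$ and $\ell_2$-sampling a row at each step; the Ahle et al.\ sketch of Lemma~\ref{soda-result} is used only as a $(1\pm 1/10q)$ norm oracle, so the only thing that must be propagated through the $q$ tensoring steps is constant-factor domination of sampling probabilities --- never an $(\epsilon,\lambda)$-spectral bound. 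If you want to pursue your tree-based construction, you would need to actually prove the missing lemma (choice of $\lambda_v$, domination of the global ridge leverage scores, and control of the intermediate effective dimensions through the Khatri--Rao steps); as written, the proposal does not establish the theorem.
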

	
	We also address ${\bf (P)}$ for approximating the Gaussian kernel by proving the following theorem.
	\begin{theorem}\label{thm:gauss}
		For any dataset $x_1, \cdots x_n \in \RR^d$ such that $\|x_i\|_2^2 \le r$ for all $i \in [n]$, any $\epsilon, \lambda \ge \frac{1}{\text{poly}(n)}$, if $K \in \RR^{n \times n}$ is the Gaussian kernel matrix corresponding to this dataset ($K_{i,j} := e^{-\|x_i-x_j\|_2^2/2}$) with statistical dimension $s_\lambda$, then there exists an algorithm that computes a matrix $Z \in \RR^{s \times n}$, with target dimension $s = O\left(\frac{s_\lambda}{\epsilon^2} \log n\right)$ such that, with high probability, $Z^\top Z$ is an $(\epsilon,\lambda)$-spectral approximation to $K$ as in \eqref{spectral-bound} using $O\left( {\text{poly}(\epsilon^{-1},r,\log n) \cdot s_\lambda^2n} + r^{5/2} \log^4n \cdot \text{nnz}(X)\right)$ time.
	\end{theorem}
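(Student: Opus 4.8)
The plan is to reduce Theorem~\ref{thm:gauss} to the polynomial-kernel construction of Theorem~\ref{thm:poly} via a truncated Taylor expansion of the exponential. Writing
\[
e^{-\|x_i - x_j\|_2^2/2} \;=\; e^{-\|x_i\|_2^2/2}\, e^{-\|x_j\|_2^2/2}\sum_{q=0}^{\infty}\frac{\langle x_i, x_j\rangle^q}{q!},
\]
let $D = \mathrm{diag}\!\left(e^{-\|x_1\|_2^2/2},\dots,e^{-\|x_n\|_2^2/2}\right)$, let $\Phi_q \in \RR^{n\times d^q}$ be the matrix whose $i$-th row is $x_i^{\otimes q}$, and let $K_q = \Phi_q\Phi_q^\top$ be the degree-$q$ polynomial kernel matrix, so that $K = D\bigl(\sum_{q\ge0}\tfrac1{q!}K_q\bigr)D$. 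First I would fix a truncation degree $Q$ and work with the concatenated feature matrix $\Psi := D\bigl[\,\Phi_0 \mid \tfrac1{\sqrt{1!}}\Phi_1 \mid \cdots \mid \tfrac1{\sqrt{Q!}}\Phi_Q\,\bigr]$, so that $\Psi\Psi^\top = D\bigl(\sum_{q=0}^Q\tfrac1{q!}K_q\bigr)D$.

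Next, bound the truncation error. Since $\|x_i\|_2^2\le r$ we have $|\langle x_i,x_j\rangle|\le r$, hence $\|K_q\|_{op}\le\|K_q\|_F\le n r^q$, and because each $K_q\succeq0$ and the entries of $D$ lie in $(0,1]$, the tail $E := D\bigl(\sum_{q>Q}\tfrac1{q!}K_q\bigr)D$ satisfies $0\preceq E\preceq\bigl(n\sum_{q>Q}\tfrac{r^q}{q!}\bigr)I$. Taking $Q = O\bigl(r+\log(n/(\epsilon\lambda))\bigr) = O(r+\log n)$ (using $\lambda\ge 1/\text{poly}(n)$) forces $\|E\|_{op}\le\epsilon\lambda/3$, so $0\preceq K-\Psi\Psi^\top\preceq(\epsilon\lambda/3)I$. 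A routine calculation then shows that any $(\epsilon/3,\lambda)$-spectral approximation of $\Psi\Psi^\top$ is an $(\epsilon,\lambda)$-spectral approximation of $K$; moreover, since $\Psi\Psi^\top\preceq K$, monotonicity of the map $M\mapsto{\bf tr}(M(M+\lambda I)^{-1}) = n - \lambda\,{\bf tr}((M+\lambda I)^{-1})$ in the PSD order gives that the statistical dimension of $\Psi\Psi^\top$ is at most $s_\lambda$. Hence it suffices to spectrally approximate $\Psi\Psi^\top$ with target dimension $O(s_\lambda\epsilon^{-2}\log n)$.

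It then remains to apply the adaptive sampling machinery developed for Theorem~\ref{thm:poly} to the matrix $\Psi$: sample $O(s_\lambda\epsilon^{-2}\log n)$ columns of $\Psi$ in proportion to their $\lambda$-ridge leverage scores to form $Z$. Each block $\tfrac1{\sqrt{q!}}\Phi_q$ of $\Psi$ carries exactly the degree-$q$ tensor structure that this machinery manipulates implicitly in near input sparsity time; the only step whose cost scales with $\text{nnz}(X)$ — applying the base sparse sketch to each data point before tensoring — need be carried out only once, for the top degree $Q$, after which it serves every block $q\le Q$, giving a cost of $O(Q^{5/2}\log^4 n\cdot\text{nnz}(X)) = O(r^{5/2}\log^4 n\cdot\text{nnz}(X))$. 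The remaining leverage-score estimation and sampling across the $Q+1$ blocks contribute $\text{poly}(\epsilon^{-1},r,\log n)\cdot s_\lambda^2 n$. The boundedness hypothesis of Theorem~\ref{thm:poly}, ${\bf tr}(K)/(\epsilon\lambda)=O(\text{poly}(n))$, holds automatically here because ${\bf tr}(K)=n$ and $\epsilon,\lambda\ge 1/\text{poly}(n)$.

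I expect the main obstacle to be keeping the target dimension tied to the statistical dimension $s_\lambda$ of $K$ rather than to $\sum_{q\le Q}s_\lambda(K_q)$, which can be far larger: this forces the sampling to be performed jointly on $\Psi$ rather than block by block, and requires exploiting that the statistical dimension is robust under the spectral approximation $\Psi\Psi^\top\approx K$. The second delicate point, inherited from Theorem~\ref{thm:poly}, is that tensoring amplifies sketching distortion, so the base sketch must have dimension growing polynomially in the degree; controlling this blow-up across all degrees $q\le Q$ simultaneously, while reusing a single base sketch for all of them, is what produces the stated $r^{5/2}$ dependence in the input-sparsity term rather than a worse power.
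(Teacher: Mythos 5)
Your proposal is correct and follows essentially the same route as the paper: truncate the Taylor expansion at degree $O(r+\log n)$ to get the concatenated, rescaled polynomial lifting (the paper's Definition of $\phi_q$ and Claim on the $\frac{\epsilon}{2}\lambda$ operator-norm error), then run the recursive ridge-leverage-score sampler on this lifted feature matrix, with a modified row sampler that first picks a degree block and then tensor indices, reusing a single sketch across all degrees (the paper's Algorithm 3 and Lemma on the Gaussian \textsc{RowSampler}). Your monotonicity argument bounding the statistical dimension of the truncated kernel by $s_\lambda$ is a clean way to justify the choice $\mu=O(s_\lambda)$ that the paper uses implicitly.
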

	Theorems \ref{thm:poly} and \ref{thm:gauss} imply accelerated algorithms for kernel ridge regression (KRR) with improved statistical and algorithmic guarantees. We analyze the empirical risk of our sampling algorithm for the KRR problem in Appendix H.
	Furthermore, in the experiments section we evaluate our approximate KRR method on various standard large-scale regression datasets and empirically show that our method competes favorably with the  state-of-the-art, including Nystrom \cite{musco2017recursive} and Fourier features methods \cite{rahimi2008random}, as well as the oblivious sketching of \cite{ahle2019oblivious}. We show that our method achieves better testing error and smaller runtime on large  datasets with more than half a million training examples.
	
	{\bf Additional downstream learning applications:}
	While we focus on KRR here, we remark that spectral approximation bounds form the basis of analyzing sketching methods for tasks including kernel low-rank approximation, PCA, CCA, k-means and many more. In the kernelized setting, such bounds have been analyzed, without regularization, for the polynomial kernel~\cite{avron2014subspace}. It is shown in \cite{cohen2017input} that \eqref{spectral-bound} along with a trace condition on $Z^\top Z$ (which holds for the sampling approaches we consider) yields a so-called \emph{projection-cost preservation} condition. With $\lambda$ chosen appropriately, this condition ensures that $Z^\top Z$ can serve as a proxy for $K$ for approximately solving kernel k-means and for certain versions of kernel PCA and kernel CCA. See~\cite{musco2017recursive} for details, where this analysis is carried out for the Nystrom method.
	
	\subsection{Prior Work}
	A popular approach for accelerating kernel methods is based on Nystrom sampling. We refer the reader to the work of \cite{musco2017recursive} and the references therein. By recursively sampling Nystrom landmarks according to the so-called ridge leverage score distribution, \citet{musco2017recursive} prove that for any kernel $K$ with statistical dimension $s_\lambda$, there exists an algorithm that outputs a matrix $Z \in \RR^{s \times n}$ with $s=O\left(\frac{s_\lambda}{\epsilon} \log n\right)$ which satisfies the spectral approximation guarantee of \eqref{spectral-bound} with high probability, using $O\left( n  \frac{s_\lambda^2}{\epsilon^2} \cdot \log^2 n + \frac{s_\lambda}{\epsilon} \log n \cdot \text{nnz}(X) \right)$ runtime. However, the leading term in the time complexity of this method is $O \left(\frac{s_\lambda}{\epsilon} \log n \cdot \text{nnz}(X)\right)$, which unsatisfactorily depends on $\epsilon^{-1}$ and also depends linearly on $s_\lambda$.
	Hence, for both the polynomial and Gaussian kernels our Theorems~\ref{thm:poly}~and~\ref{thm:gauss} improve on the runtime of this method by a factor of $\epsilon^{-1} s_\lambda$.
	
	Another popular line of work on kernel approximation problems is the Fourier features method of \citet{rahimi2008random}. It is proved in \cite{avron2017random} that this method can achieve spectral approximation guarantees for the Gaussian kernel using a sub-optimal number $s \approx \epsilon^{-2} \frac{n}{\lambda} \log n$ of samples and $O\left( \epsilon^{-2} \frac{n}{\lambda} \log n \cdot \text{nnz}(X) \right)$ runtime. This sample complexity is substantially larger than our result in Theorem~\ref{thm:gauss}. Furthermore we improve the runtime of this method by a factor of $\epsilon^{-2} \frac{n}{\lambda}$. However, \cite{avron2017random} show that this method can be modified to achieve a sample complexity of $s=\Theta(1)^d \cdot \frac{s_\lambda}{\epsilon^2} \log n$ using a runtime of $\Theta(1)^d \cdot \frac{s_\lambda}{\epsilon^2} \log n \cdot \text{nnz}(X)$. For constant dimensional datasets (constant $d$) the number of samples that \cite{avron2017random} achieve is comparable to our target dimension in Theorem~\ref{thm:gauss} but it deteriorates exponentially with the dimension $d$. Furthermore, the runtime of this method is substantially larger than our runtime by a factor of $\Theta(1)^d \cdot \epsilon^{-2} s_\lambda$.
	
	In the linear sketching literature, \cite{avron2014subspace} proposed an \emph{oblivious subspace embedding} for the polynomial kernel based on the TensorSketch of \cite{pham2013fast}. They applied this method to a wide array of kernel problems, including PCA, PCR, and CCA. The runtime of this method, while nearly linear in $\text{nnz}(X)$, scales \emph{exponentially} in the degree $q$ of the polynomial kernel. Their runtime for the degree-$q$ polynomial kernel is $O\left( \frac{q \cdot 3^q s_\lambda^2}{\epsilon^2} + q \cdot  \text{nnz}(X)\right)$, which has an unsatisfactory $3^q$ term.
	
	Recently, \cite{ahle2019oblivious} proposed a new \emph{oblivious sketching} solution for the polynomial kernel that improves the exponential dependence of TensorSketch on $q$ to polynomial.
	\citet{ahle2019oblivious} gave an algorithm that outputs a matrix $Z \in \RR^{s \times n}$ with $s=\widetilde{O}\left(\frac{q^4s_\lambda}{\epsilon^2} \right)$ which satisfies the spectral approximation guarantee of \eqref{spectral-bound} with high probability. Their algorithm has  $\widetilde{O}\left( \frac{q^5 s_\lambda}{\epsilon^2} \cdot n + \frac{q^5}{\epsilon^2}\cdot \text{nnz}(X) \right)$ runtime\footnote{$\widetilde{O}$ notation hides $\text{poly}(\log n)$ factors.}. This runtime has an undesirable inverse polynomial dependence on $\epsilon$ and scales sub-optimally with the degree of the polynomial kernel as $q^5$. Our Theorem~\ref{thm:poly} improves the runtime of \cite{ahle2019oblivious} by an $\epsilon^{-2}q^{5/2}$ factor.
	Moreover, they showed that their sketch for the polynomial kernel leads to an efficient oblivious sketch for the Gaussian kernel on bounded datasets.
	\citet{ahle2019oblivious} gave an algorithm that for any dataset $x_1,x_2,\cdots x_n \in \RR^d $ with radius $r$, computes a matrix $Z\in\RR^{s \times n}$ with $s=\widetilde{O}\left(\frac{r^5s_\lambda}{\epsilon^2}\right)$ which spectrally approximates the Gaussian kernel matrix corresponding to this dataset as in \eqref{spectral-bound} with high probability. This was the first result that resolved the curse of dimensionality for embedding the high dimensional Gaussian kernel. The algorithm has $\widetilde{O}\left( \frac{r^6 s_\lambda}{\epsilon^2} \cdot n+ \frac{r^6}{\epsilon^2}\cdot \text{nnz}(X) \right)$ runtime, which unsatisfactorily depends on $1/\epsilon^2$ and scales poorly as a function of the dataset's radius as $r^6$. Our Theorem~\ref{thm:gauss} improves this runtime by a factor of $\epsilon^{-2}r^{7/2}$.
	
	\subsection{Our Techniques}
	Our method relies on the fact that any kernel function $k:\RR^d \times \RR^d \rightarrow \RR$ defines a lifting $\phi$ such that the kernel function computes the inner product between the lifted data points, i.e., $k(x,y) = \langle \phi(x), \phi(y) \rangle$. Therefore, any kernel matrix $K$ can be decomposed as $K = \Phi^\top \Phi$ where $\Phi$ is a matrix with $n$ columns whose columns are the lifted data points $\phi(x_i)$. Our approach is to design an importance sampling matrix $\Pi$ such that $Z = \Pi \Phi$ satisfies the spectral approximation guarantee of \eqref{spectral-bound}. Our algorithm generates a sampling matrix $\Pi$ that samples a small number of rows of $\Phi$ using a \emph{recursive leverage score} sampling technique, which has been extensively applied to various algorithmic problems in the literature \cite{kapralov2014single,alaoui2015fast, cohen2016online, musco2017recursive, avron2017random,cohen2017input}.
	Our main novelty is in generating a sample from the leverage score distribution without ever forming the entire distribution explicitly, as the support size of this distribution is equal to the number of rows of $\Phi$ which is typically high (even infinite).
	
	For the polynomial kernel, the lifting matrix is $\Phi = X^{\otimes q}$, where $X^{\otimes q}$ is a $d^q \times n$ matrix whose columns are obtained by a $q$-fold self-tensoring of the columns of the dataset matrix $X \in \RR^{d \times n}$ (see Section~\ref{sec:prelim} for notation). After multiple reductions, our importance sampling problem boils down to performing $\ell_2$-sampling on a vector of the form $X^{\otimes q} v$, where $v$ is an arbitrary vector in $\RR^{n}$. Here by $\ell_2$-sampling of a vector, we mean sampling a coordinate proportional to its squared value. We design a primitive that can generate a sample $i \in [d^q]$ with probability proportional to the squared value of the $i^{th}$ entry of the vector $X^{\otimes q} v$ using roughly  $\text{nnz}(X)$ time. Our algorithm relies on the fact that, by reshaping, entries of the vector $X^{\otimes q} v$ are in bijective correspondence with entries of the matrix $X^{\otimes q-1} \cdot \text{diag}(v) X^\top$, where $\text{diag}(v)$ is a diagonal $n\times n$ matrix whose diagonal entries are the elements of $v$. Therefore, our importance sampling amounts to sampling an element of $X^{\otimes q-1} \cdot \text{diag}(v) X^\top$ with probability proportional to the square of its absolute value. We do this by first sampling a column of this matrix with probability proportional to its squared norm, and then sampling a row with probability proportional to the squares of the entries of the sampled column.
	After sampling a column $l \in [d]$ of the matrix $X^{\otimes q-1} \cdot \text{diag}(v) X^\top$, we next perform $\ell_2$-sampling on the $l^{th}$ column of the mentioned matrix, which is in the form of $X^{\otimes q-1} u$, where $u=\text{diag}(v) X_{l,\star}^\top$. One can see that we have made progress and now it is enough to iterate in this fashion by performing $\ell_2$-sampling on $X^{\otimes q-1} u$.
	However, note that $X^{\otimes q-1} \cdot \text{diag}(v) X^\top$ has $d^{q-1}$ rows, and hence, computing its column norms is prohibitively expensive. We tackle this issue by sketching the columns of $X^{\otimes q-1} \cdot \text{diag}(v) X^\top$ using the sketch introduced in \cite{ahle2019oblivious}, which is able to preserve the column norms up to a small error and with runtime roughly $\text{nnz}(X)$.
	
	Our algorithm is actually more involved and includes extra dimensionality reduction steps. In the paragraph above we explained how to generate a single sample with the right distribution, but in order to obtain the spectral approximation guarantee of \eqref{spectral-bound} we need to generate $s=O\left(\frac{s_\lambda}{\epsilon^2}\log n\right)$ such samples. It is crucial that our runtime does not lose a multiplicative factor of $s$. We heavily exploit the structure of tensor products to reuse most computations and generate $s$ samples in time proportional to $\text{nnz}(X)$.
	Moreover, to spectrally approximate the Gaussian kernel, we adapt our sampling algorithm to a truncated Taylor expansion of the Gaussian kernel. Furthermore, in Section~\ref{sec:dot-prod} we discuss how our method can be generalized to any dot-product kernel.
	
	\section{Preliminaries}\label{sec:prelim}
	Throughout the paper, for any matrices $A \in \RR^{m \times n}$ and $B\in \RR^{d \times n}$, $A\oplus B \in \RR^{(m+d) \times n}$ denotes the vertical concatenation of $A$ and $B$, i.e.,
	$A \oplus B = \begin{bmatrix}
	A\\
	B
	\end{bmatrix}$.\\
	Moreover, $A \otimes B \in \RR^{(md) \times n}$ denotes the vertical tensor product of $A$ and $B$. The rows of $A\otimes B$ are indexed by $(i,j)$ where $i \in [m]$ and $j \in [d]$ and for any $l \in [n]$, $[A\otimes B]_{(i,j),l}= A_{i,l} \cdot B_{j,l}$.
	We also use $A^{\otimes q}$ to denote, $A^{\otimes q} = \underbrace{A\otimes A \cdots \otimes A}_{\text{$q$ terms}}$.\\
	For any matrix $X$ we use $X_{i,\star}$ to denote its $i^{th}$ row and we use $X_{\star,i}$ to refer to its $i^{th}$ column. Also for any set $S$, $X_{S,\star}$ denotes a sub-matrix of $X$ that includes rows $i \in S$ of $X$.
	
	\section{Algorithm and Analysis}
	Let $\Phi \in \RR^{D \times n}$ be the feature matrix whose columns are the projections of the data points in the feature space. We start by presenting a recursive importance sampling algorithm that efficiently computes a matrix $Z$ which satisfies the spectral approximation guarantee of \eqref{spectral-bound} for the kernel $K=\Phi^\top \Phi$.
	Sampling rows of $\Phi$ with probabilities proportional to the squared row norms of the matrix $\Phi (\Phi^\top \Phi + \lambda I)^{-1/2}$, which are known as the \emph{ridge leverage scores} of $\Phi$, is an efficient sampling strategy for obtaining the spectral approximation guarantee of \eqref{spectral-bound}. In Algorithm~\ref{alg:outerloop}, we give a generic recursive method for performing approximate leverage score sampling on any matrix $\Phi$. The recursive procedure works by generating samples from a crude approximation to the leverage scores and iteratively refining the sampling distribution.
	\begin{algorithm}[t]
		\caption{\algoname{Recursive Leverage Score Sampling}}
		{\bf input}: Matrix $\Phi \in \RR^{D \times n}$, $\lambda\in \RR_{+}$, $\epsilon \in \RR_{+}$, $\mu\in \RR_{+}$\\
		{\bf output}: Sampling matrix $\Pi \in \RR^{s \times D}$
		\begin{algorithmic}[1]
			\STATE{$s \gets C \frac{\mu}{\epsilon^2} \log_2 n$ for some constant $C$}
			\STATE{$S_0 \gets \{0\}^{1\times D}$}
			\STATE{$\lambda_0 \gets \frac{\|\Phi\|_F^2}{\epsilon}$}
			\STATE{$T\gets \lceil \log_2 \frac{\lambda_0}{\lambda} \rceil$}
			\FOR{$t = 1$ to $T$}
			\STATE{$S_{t} \gets \textsc{RowSampler}\left(\Phi, S_{t-1}\Phi, \lambda_{t-1},s\right)$} \label{oversample-alg}
			\STATE{$\lambda_{t} = \lambda_{t-1}/2$}
			\ENDFOR
			\STATE{\textbf{return} $\Pi = S_T$}
		\end{algorithmic}
		\label{alg:outerloop}
	\end{algorithm}
	We first introduce the definition of a \emph{row norm sampler} as follows,
	\begin{definition}[Row Norm Sampler]\label{def:row-samp}
		Let $\Phi$ be a $D\times n$ matrix with rows $\phi_1,\phi_2, \cdots \phi_D \in \RR^n$. For any probability distribution $\{p_i\}_{i=1}^D$ that satisfies $p_i\ge \frac{1}{4}\frac{\|\phi_i\|_2^2}{\|\Phi\|_F^2}$ for all $i\in[D]$, and any positive integer $s$, a \emph{rank-$s$ row norm sampler} for matrix $\Phi$ is a random matrix $S\in\RR^{s \times D}$ which is constructed by generating $s$ i.i.d. samples $j_1, j_2, \cdots j_s \in [D]$ with distribution $\{p_i\}_{i=1}^D$ and letting the $r^{th}$ row of $S$ be $\frac{1}{\sqrt{sp_{j_r}}}{\bf e}_{j_r}^\top$ for every $r\in [s]$, where ${\bf e}_{1}, {\bf e}_{2}, \cdots {\bf e}_{D}\in \RR^D$ are the standard basis vectors in $\RR^D$.
	\end{definition}
	Now we are ready to prove the correctness of Algorithm \ref{alg:outerloop},
	\begin{lemma}\label{resursive-rlss-lem}
		Suppose that for any matrices $\Phi\in \RR^{D \times n}$ and $B\in \RR^{m\times n}$, any  $\lambda'>0$, and any positive integer $s'$, the primitive \textsc{RowSampler}$(\Phi,B,\lambda',s')$ returns a rank-$s'$ row norm sampler for matrix $\Phi(B^\top B + \lambda' I)^{-1/2}$ as in Definition \ref{def:row-samp}. Then for any matrix $\Phi \in \RR^{D \times n}$ with statistical dimension $s_\lambda = \|\Phi(\Phi^\top \Phi + \lambda I)^{-1/2}\|_F^2$, any $\lambda, \epsilon>0$, any $\mu \ge s_\lambda$, Algorithm \ref{alg:outerloop} returns a sampling matrix $\Pi \in \RR^{s \times d}$ with $s = O(\frac{\mu}{\epsilon^2} \log n)$ such that with probability $1 - \frac{1}{\text{poly}(n)}$,
		\[ \frac{\Phi^\top \Phi + \lambda I}{1+\epsilon} \preceq \Phi^\top \Pi^\top \Pi \Phi + \lambda I \preceq \frac{\Phi^\top \Phi + \lambda I}{1-\epsilon}.\]
	\end{lemma}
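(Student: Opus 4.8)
The plan is to establish, by induction on the iteration index $t$, the invariant $(\ast_t)$: with probability $1-\frac{1}{\text{poly}(n)}$ the matrix $S_t$ built in Algorithm~\ref{alg:outerloop} satisfies $\frac{1}{1+\epsilon}(\Phi^\top\Phi+\lambda_t I)\preceq \Phi^\top S_t^\top S_t\Phi+\lambda_t I\preceq\frac{1}{1-\epsilon}(\Phi^\top\Phi+\lambda_t I)$, and then to deduce the lemma by specializing to $t=T$ and transferring the guarantee from level $\lambda_T$ down to $\lambda$. Throughout I would assume $\epsilon\le 1/2$ (larger $\epsilon$ only weakens the conclusion and needs cosmetic changes). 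For the base case $t=0$ I would observe that $S_0=0$ and that the choice $\lambda_0=\|\Phi\|_F^2/\epsilon$ forces $\Phi^\top\Phi\preceq\|\Phi\|_F^2 I=\epsilon\lambda_0 I$, so $\lambda_0 I$ already lies in the required spectral window; hence $(\ast_0)$ holds deterministically.

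For the inductive step I would assume $(\ast_{t-1})$ and argue as follows. At iteration $t$, by the hypothesis on \textsc{RowSampler}, $S_t$ is a rank-$s$ row norm sampler for $\Psi:=\Phi(\Phi^\top S_{t-1}^\top S_{t-1}\Phi+\lambda_{t-1}I)^{-1/2}$, i.e.\ it draws $s$ i.i.d.\ rows of $\Phi$ with probabilities $p_i\ge\frac14\|\Psi_{i,\star}\|_2^2/\|\Psi\|_F^2$, rescaled as in Definition~\ref{def:row-samp}. The key step — and the main obstacle — is to show these probabilities are proportional, up to an absolute constant, to the true $\lambda_t$-ridge leverage scores $\tau_i:=[\Phi(\Phi^\top\Phi+\lambda_t I)^{-1}\Phi^\top]_{i,i}$ of $\Phi$, even though $\Psi$ is built from the \emph{sketched} Gram matrix $\Phi^\top S_{t-1}^\top S_{t-1}\Phi$ at the \emph{previous} regularization level $\lambda_{t-1}$. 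To do this I would push $(\ast_{t-1})$ through: together with $\lambda_t\le\lambda_{t-1}$, and with $\Phi^\top\Phi+\lambda_{t-1}I=\Phi^\top\Phi+2\lambda_t I\preceq 2(\Phi^\top\Phi+\lambda_t I)$, it sandwiches $\Phi^\top S_{t-1}^\top S_{t-1}\Phi+\lambda_{t-1}I$ between $\frac{1}{1+\epsilon}(\Phi^\top\Phi+\lambda_t I)$ and $\frac{2}{1-\epsilon}(\Phi^\top\Phi+\lambda_t I)$; inverting and reading off diagonal entries (and the trace) gives $\frac{1-\epsilon}{2}\tau_i\le\|\Psi_{i,\star}\|_2^2\le(1+\epsilon)\tau_i$ and $\frac{1-\epsilon}{2}s_{\lambda_t}\le\|\Psi\|_F^2\le(1+\epsilon)s_{\lambda_t}$, where $s_{\lambda_t}=\sum_i\tau_i$. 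Hence $p_i\ge\frac{1-\epsilon}{8(1+\epsilon)}\cdot\frac{\tau_i}{s_{\lambda_t}}\ge\frac{1}{24}\cdot\frac{\tau_i}{s_{\lambda_t}}$. Since $\lambda_{T-1}=\lambda_0/2^{T-1}>\lambda$ by the definition of $T$, every $\lambda_t$ with $t\le T$ exceeds $\lambda/2$, so $s_{\lambda_t}\le s_{\lambda/2}\le 2s_\lambda\le 2\mu$, and therefore $s\,p_i\ge\frac{s}{48\mu}\tau_i=\frac{C}{48\epsilon^2}(\log_2 n)\,\tau_i$. For $C$ a large enough absolute constant this is exactly the per-row oversampling condition under which the standard matrix Chernoff bound for approximate ridge leverage score sampling (e.g.\ \cite{cohen2017input,musco2017recursive}) certifies, with failure probability $\frac{1}{\text{poly}(n)}$, that $\Phi^\top S_t^\top S_t\Phi+\lambda_t I$ is a $(1\pm\epsilon)$-spectral approximation of $\Phi^\top\Phi+\lambda_t I$ — that is, $(\ast_t)$. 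This closes the induction.

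Finally I would union-bound over the $T=\lceil\log_2(\lambda_0/\lambda)\rceil$ iterations (enlarging $C$ so each contributes failure probability $o(1/(T\cdot\text{poly}(n)))$) to obtain $(\ast_T)$ for $\Pi=S_T$ with probability $1-\frac{1}{\text{poly}(n)}$, and then transfer from $\lambda_T$ to $\lambda$: since $\lambda/2<\lambda_T\le\lambda$, adding $(\lambda-\lambda_T)I\succeq 0$ to both sides of $(\ast_T)$ and using $\lambda_T\le\lambda$ in the scalar identities $\frac{\lambda_T}{1-\epsilon}+(\lambda-\lambda_T)=\lambda+\frac{\epsilon\lambda_T}{1-\epsilon}\le\frac{\lambda}{1-\epsilon}$ and $\frac{\lambda_T}{1+\epsilon}+(\lambda-\lambda_T)=\lambda-\frac{\epsilon\lambda_T}{1+\epsilon}\ge\frac{\lambda}{1+\epsilon}$ upgrades it to $\frac{1}{1+\epsilon}(\Phi^\top\Phi+\lambda I)\preceq\Phi^\top\Pi^\top\Pi\Phi+\lambda I\preceq\frac{1}{1-\epsilon}(\Phi^\top\Phi+\lambda I)$, which is the claim. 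The step I expect to need real care is the sandwiching above: one must track how the error in the inner sketch interacts with the deliberate halving $\lambda_t=\lambda_{t-1}/2$ of the regularizer, and it is precisely because the halving is mild that the re-regularization loss stays bounded by an absolute constant — which is what lets a single fixed sample size $s=O(\mu\epsilon^{-2}\log n)$ suffice at every level rather than one that must grow with $t$.
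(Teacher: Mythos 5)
Your proposal is correct and follows essentially the same route as the paper's proof: induction on the rounds with the same base case from $\lambda_0=\|\Phi\|_F^2/\epsilon$, an inductive step that uses the previous round's spectral guarantee plus the halving $\lambda_t=\lambda_{t-1}/2$ to show the row-norm sampler's probabilities dominate the true $\lambda_t$-ridge leverage scores up to an absolute constant, the standard approximate leverage-score sampling lemma (the paper's Lemma~\ref{lem:lss}), a union bound over the $T=O(\log n)$ rounds, and the final transfer from $\lambda_T\le\lambda$ to $\lambda$. The only differences are cosmetic (your constants $1/24$ and $\epsilon\le 1/2$ versus the paper's $1/16$ and $\epsilon\le 1/3$, and your slightly more careful $s_{\lambda_t}\le 2s_\lambda$ bound).
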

	The proof of this lemma is included in Appendix C.
	
	\subsection{Adaptive Sampling for the Polynomial Kernel}\label{sec:polynomial}
	The polynomial kernel of degree $q$ is defined as $k(x,y)=\langle x,y \rangle^q$. Using the definition of tensor products, one can see that $\langle x,y \rangle^q = \langle x^{\otimes q},y^{\otimes q} \rangle$, where $x^{\otimes q}$ and $y^{\otimes q}$ are $q$-fold self tensor products of vectors $x$ and $y$, respectively.
	Suppose $X \in \RR^{d \times n}$ is the dataset matrix. The polynomial kernel matrix can be decomposed as $K=\left(X^{\otimes q}\right)^\top X^{\otimes q}$, where $X^{\otimes q}$ is a $d^q \times n$ matrix whose columns are obtained by the $q$-fold self tensoring of the columns of $X$. The goal is to apply the iterative leverage score sampling of Algorithm \ref{alg:outerloop} to the feature matrix $\Phi=X^{\otimes q}$ in nearly $\text{nnz}(X)$ time. Note that the matrix $\Phi$ has a large number $d^q$ of rows so even assuming that an oracle gives us the leverage score distribution of $\Phi$ for free, just reading this distribution takes $d^q$ time. We show how to generate samples from the right distribution quickly.\\
	Algorithm \ref{alg:outerloop} crucially uses the primitive \textsc{RowSampler}, which carries out the main computations of our proposed algorithm. This primitive performs row norm sampling (see Definition \ref{def:row-samp}) on a matrix of the form $\Phi(B^\top B + \lambda I)^{-1/2}$, for any matrix $B$, very efficiently.
	
	\subsubsection{\textsc{RowSampler} for the Polynomial kernel}
	An important technical contribution of this work is an efficient algorithm that can perform row norm sampling (see Definition \ref{def:row-samp}) on a matrix of the form $X^{\otimes q} (B^\top B + \lambda I)^{-1/2}$ using nearly $\text{nnz}(X)$ runtime, where $X\in \RR^{d\times n}$ and $B\in \RR^{m\times n}$. Our primitive uses the sketch which was proposed in \cite{ahle2019oblivious} to preserve the norm of vectors in $\RR^{d^q}$ and sketch vectors of the form $x^{\otimes q}$ quickly. The next lemma follows from Theorem 1.2 of \cite{ahle2019oblivious},
	\begin{lemma}\label{soda-result}
		For every positive integers $q,d$, every $\epsilon>0$, and every $\delta>0$, there exists a distribution on random matrices $Q^q \in \RR^{m \times d^q}$ with $m = O\left(\frac{1}{\epsilon^2}\log \frac{1}{\delta} \right)$ such that,
		$\Pr\left[ \|Q^q y\|_2^2 \in (1\pm \epsilon)\|y\|_2^2 \right] \ge 1 - \delta$ for any $y \in \RR^{d^q}$.
		Moreover, for any $x \in \RR^d$, the total time to compute $Q^q \left(x^{\otimes q-j} \otimes {\bf e}_1^{\otimes j}\right)$ for all $j=0,1,2, \cdots q$ is $O\left( {\frac{q^2}{\epsilon^4}} \log^4\frac{1}{\delta} + \frac{q^{3/2}}{\epsilon} \log \frac{1}{\delta} \cdot \text{nnz}(x) \right)$, where ${\bf e}_1\in\RR^d$ is the standard basis vector along the first coordinate.
	\end{lemma}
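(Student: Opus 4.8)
The first claim is exactly the distributional Johnson--Lindenstrauss guarantee of the sketch of \cite{ahle2019oblivious}: their Theorem~1.2 constructs a distribution over matrices $Q^q\in\RR^{m\times d^q}$ with $m=O(\epsilon^{-2}\log(1/\delta))$ that, for any \emph{fixed} $y\in\RR^{d^q}$, preserves $\|y\|_2^2$ up to a $(1\pm\epsilon)$ factor with probability at least $1-\delta$. So nothing needs to be done for the norm-preservation part beyond instantiating their parameters; the real content of the lemma is the runtime bound for computing the $q+1$ sketched vectors $Q^q\left(x^{\otimes q-j}\otimes{\bf e}_1^{\otimes j}\right)$, $j=0,1,\dots,q$, \emph{jointly}.

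To this end I would unfold the recursive structure of $Q^q$. Padding $q$ up to the next power of two, $Q^q$ is realized on a balanced binary tree $\mathcal T$ with $q$ leaves; the padded coordinates are fed the vector ${\bf e}_1$, which is exactly the kind of input we must sketch. Applied to a pure tensor $z_1\otimes\cdots\otimes z_q$ with each $z_i\in\RR^d$, the sketch applies an independent base sketch to each $z_i$ at the leaves, and at every internal node applies an independent base (degree-$2$ tensoring) sketch to the tensor product of the two vectors emitted by its children, with the root sketch having $m$ output coordinates. By \cite{ahle2019oblivious}, each internal base-sketch application costs $\text{poly}(\epsilon^{-1},\log(1/\delta),\log q)$ time with \emph{no} dependence on $\text{nnz}(x)$, a leaf application to $x$ costs $\widetilde O(\text{nnz}(x))\cdot\text{poly}(\epsilon^{-1},\log(1/\delta))$, and a leaf application to the sparse vector ${\bf e}_1$ costs only $\widetilde O(1)\cdot\text{poly}(\epsilon^{-1},\log(1/\delta))$.

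The key observation enabling the joint bound is that $x^{\otimes q-j}\otimes{\bf e}_1^{\otimes j}$ is the pure tensor with $z_i=x$ for $i\le q-j$ and $z_i={\bf e}_1$ otherwise, so consecutive members of this family differ in exactly one leaf (leaf $q-j$ flips from $x$ to ${\bf e}_1$ as $j$ increases by one). I would therefore process $j=0,1,\dots,q$ in order, maintaining the value emitted by every node of $\mathcal T$; when $j$ increments, only the $O(\log q)$ nodes on the path from leaf $q-j$ to the root change and are recomputed. Hence over all $j$ there are $O(q\log q)$ internal-node recomputations, plus a single pass at $j=0$ that computes the leaf sketch of $x$ at each of the $q$ leaves (every later leaf update replaces $x$ by the cheap ${\bf e}_1$). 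Collecting costs, the $\text{nnz}(x)$-dependent cost is just $q$ leaf sketches of $x$, and plugging in the leaf-sketch cost from \cite{ahle2019oblivious} yields the $\frac{q^{3/2}}{\epsilon}\log\frac1\delta\cdot\text{nnz}(x)$ leading term, while the $O(q\log q)$ internal recomputations together with the trivial ${\bf e}_1$-leaf updates contribute the $\frac{q^2}{\epsilon^4}\log^4\frac1\delta$ additive term. The only real obstacle is to verify that the per-node and per-leaf costs quoted from \cite{ahle2019oblivious} are exactly these, and in particular that sharing the tree across the $q+1$ shifts does not reintroduce a multiplicative factor of $q$ on the $\text{nnz}(x)$ term --- which works precisely because the single expensive $x$-sketching pass is performed once (at $j=0$) and reused by all subsequent shifts, whose updates only ever touch ${\bf e}_1$-leaves or internal nodes.
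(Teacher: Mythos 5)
Your runtime argument is essentially the paper's: unfold the recursive tree structure of the Ahle et al.\ sketch, compute the expensive leaf sketches of $x$ once, and observe that passing from $j$ to $j+1$ flips a single leaf from $x$ to ${\bf e}_1$, so only the $O(\log q)$ nodes on one root-to-leaf path need recomputation, giving $O(q\log q)$ internal updates and only $q$ applications of the leaf sketch to $x$. That part matches the paper's proof.

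However, there is a genuine gap in how you dispose of the first claim. You assert that Theorem~1.2 of \cite{ahle2019oblivious} already gives a sketch with target dimension $m=O(\epsilon^{-2}\log(1/\delta))$, so that ``nothing needs to be done.'' It does not: the tree-structured sketch $S^q$ of Ahle et al.\ requires a target dimension that grows with the degree, $s=\Omega\bigl(\tfrac{q}{\epsilon^{2}}\log^{3}\tfrac{1}{\delta}\bigr)$, whereas the lemma demands $m=O\bigl(\tfrac{1}{\epsilon^{2}}\log\tfrac{1}{\delta}\bigr)$ with no dependence on $q$. The paper closes this gap by composing: it sets $Q^q := G\,S^q$ where $G\in\RR^{m\times s}$ is a dense Gaussian JL matrix with $m=O(\epsilon^{-2}\log(1/\delta))$, and takes a union bound over the two events (each run at error $\epsilon$, failure $\delta/2$). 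This extra step is not cosmetic for the runtime either: your accounting attributes the additive $\tfrac{q^{2}}{\epsilon^{4}}\log^{4}\tfrac{1}{\delta}$ term to the $O(q\log q)$ internal-node recomputations, but in the paper those contribute only $O\bigl(qs\log q\log s\bigr)=O\bigl(\tfrac{q^{2}}{\epsilon^{2}}\log^{2}\tfrac{q}{\epsilon}\log^{3}\tfrac{1}{\delta}\bigr)$; the stated $\tfrac{q^{2}}{\epsilon^{4}}\log^{4}\tfrac{1}{\delta}$ term arises from multiplying each of the $q+1$ sketched vectors (of dimension $s\approx \tfrac{q}{\epsilon^{2}}\log^{3}\tfrac{1}{\delta}$) by the dense $m\times s$ Gaussian $G$, at cost $O\bigl(\tfrac{s}{\epsilon^{2}}\log\tfrac{1}{\delta}\bigr)$ each. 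Without the Gaussian composition your proof establishes a sketch of dimension $\Theta\bigl(\tfrac{q}{\epsilon^{2}}\log^{3}\tfrac{1}{\delta}\bigr)$, not the claimed $O\bigl(\tfrac{1}{\epsilon^{2}}\log\tfrac{1}{\delta}\bigr)$, and the downstream use of this lemma (where the sketched row dimension enters the cost of the sampling steps) would degrade accordingly.
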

	We prove this lemma in Appendix D.
	Now we are ready to design the procedure \textsc{RowSampler} to perform row norm sampling on matrices of the form $X^{\otimes q} (B^\top B+\lambda I)^{-1/2}$.

	\begin{algorithm}[h!]
		\caption{\algoname{RowSampler for Polynomial Kernel}}
		{\bf input}: $X \in \RR^{d \times n}$, $q\in \mathbb{Z}_+$, $B \in \RR^{m\times n}$, $\lambda \in \RR_{+}$, $s\in \mathbb{Z}_{+}$\\
		{\bf output}: Sampling matrix $S \in \RR^{s \times d^q}$
		\begin{algorithmic}[1]
			\STATE{Generate $H\in\RR^{n\times d'}$ with i.i.d. normal entries with $d'= C_1 q \log_2n$}
			\STATE{$M \gets (B^\top B + \lambda I)^{-1/2} \cdot H$}\label{M-alg}
			\STATE{Let $Q^q\in\RR^{m'\times d^q}$ be an instance of the sketch from Lemma \ref{soda-result} with $\epsilon=\frac{1}{10q}$, $\delta=\frac{1}{\text{poly}(n)}$, $m'=C_2q^2 \log_2n$}
			\STATE{Compute $P_j=Q^q \left(X^{\otimes (q-j)} \otimes E_1^{\otimes j}\right)$ for all $j=0,1, \cdots q-1$, where $E_1 \in \RR^{d\times n}$ is a matrix whose columns are copies of ${\bf e}_1$, i.e., $E_1=\Big[ \underbrace{ {\bf e}_1, {\bf e}_1, \cdots {\bf e}_1 }_{n \text{ copies}}  \Big]$}\label{Pj-alg}
			\STATE{$Z\gets P_0M$}
			\STATE{$p_i\gets \frac{\|Z_{\star,i}\|_2^2}{\|Z\|_F^2}$ for every $i\in[d']$}\label{J-dist-alg}
			
			\STATE{Generate i.i.d. samples $j_1,j_2,\cdots j_s\in[d']$ with distribution $\{p_i\}_{i=1}^{d'}$} \label{jsamples-alg}
			
			\STATE{Let $h:[d]\rightarrow[s']$ be a fully independent and uniform hash function with $s'=\lceil {q}^{3/2}s \rceil$}
			\STATE{Let $h^{-1}(r)=\left\{ j\in[d]:h(j)=r \right\}$ for every $r\in[s']$}
			
			\STATE{For every $r\in[s']$, generate $G_r \in \RR^{n'\times d_r}$ with i.i.d. normal entries where $d_r=|h^{-1}(r)|$ and $n'=C_3q^2\log_2n$}
			\STATE{$W_r \gets G_r \cdot X_{h^{-1}(r),\star}$ for every $r\in[s']$}\label{W}
			\FOR{$l=1$ to $s$}
			\STATE{$D^{(0)} \gets \text{diag}(M_{\star,j_l})$}
			\FOR{$a=1$ to $q$}
			\STATE{$p^{a}_r\gets \frac{\left\|W_r \cdot D^{(a-1)} \cdot P_a^\top \right\|_F^2}{\sum_{t=1}^{s'} \left\|W_t \cdot D^{(a-1)} \cdot P_a^\top \right\|_F^2}$ for every $r \in [s']$}\label{dist-par}
			\STATE{Generate a sample $t$ with distribution $\{p^a_r\}_{r=1}^{s'}$}
			\STATE{$q^a_i \gets \frac{\left\|X_{i,\star}  D^{(a-1)}  P_a^\top \right\|_2^2}{\left\|X_{h^{-1}(t),\star}  D^{(a-1)}  P_a^\top \right\|_F^2}$ for all $i\in h^{-1}(t)$}\label{dist-qai}
			
			\STATE{Sample an $i_a$ with distribution $\{q^a_i \}_{i\in h^{-1}(t)}$}\label{isample-alg}
			
			\STATE{$D^{(a)}\gets D^{(a-1)}\cdot \text{diag}(X_{i_a,\star})$}
			\ENDFOR
			
			\STATE{$\beta \gets 0$}
			\FOR{$j=1$ to $d'$}\label{beta-forloop}
			\STATE{$L^{(0)} \gets \text{diag}(M_{\star,j})$}
			\FOR{$b=1$ to $q$}
			\STATE{$p^*_b \gets \frac{\left\|W_{h(i_b)} \cdot L^{(b-1)} \cdot P_b^\top \right\|_F^2}{\sum_{t=1}^{s'} \left\|W_t \cdot L^{(b-1)} \cdot P_b^\top \right\|_F^2}$}\label{p-star}
			\STATE{$q^*_b \gets \frac{\left\|X_{i_b,\star}  L^{(b-1)}  P_b^\top \right\|_2^2}{\left\|X_{h^{-1}(h(i_b)),\star}  L^{(b-1)}  P_b^\top \right\|_F^2}$}\label{q-star}
			\STATE{$L^{(b)}\gets L^{(b-1)}\cdot \text{diag}(X_{i_b,\star})$}
			\ENDFOR
			\STATE{$\beta \gets \beta + sp_j \cdot \prod_{b=1}^q (p^*_b q^*_b) $}
			\ENDFOR
			
			\STATE{Let $l^{th}$ row of $S$ be $ \beta^{-1/2}  
				\left({\bf e}_{i_1} \otimes {\bf e}_{i_2} \otimes \cdots {\bf e}_{i_q}\right)^\top$}
			\ENDFOR
			\STATE{\textbf{return} $S$}
		\end{algorithmic}
		\label{alg:rotatedrowsampler-poly}
	\end{algorithm}
	
	{\bf Overview of Algorithm \ref{alg:rotatedrowsampler-poly}:} The goal is to generate a sample $(i_1,i_2, \cdots i_q) \in [d]^q$ with probability proportional to the squared norm of the row $(i_1, \cdots i_q)$ of the matrix $X^{\otimes q} (B^\top B + \lambda I)^{-1/2}$. Because the matrix $(B^\top B + \lambda I)^{-1/2}$ is of a large $n \times n$ size, we seek to compress it without perturbing the row norm distribution of $X^{\otimes q} (B^\top B + \lambda I)^{-1/2}$. This can be done by applying a JL-transformation to the rows of this matrix (see, e.g., \cite{dasgupta2003elementary,kane2014sparser}). Let $H \in \RR^{n \times d'}$ be a random matrix with i.i.d. normal entries with $d' = C_1 q \log_2 n$. Then with probability $1 - \frac{1}{\text{poly}(n^q)}$ the norm of each row of the matrix $X^{\otimes q} (B^\top B + \lambda I)^{-1/2} \cdot H$ will be preserved up to a $(1\pm 0.1)$ factor and hence by a union bound, with high probability all row norms of $X^{\otimes q} (B^\top B + \lambda I)^{-1/2} \cdot H$ are within a $(1\pm 0.1)$ factor of the row norms of the original matrix. This is done in line~2 of the algorithm by computing the matrix $M = (B^\top B + \lambda I)^{-1/2} \cdot H$, which can be done quickly since $B$ is a low rank matrix and $H$ has few columns. 
	
	Now the problem is reduced to performing row norm sampling on $X^{\otimes q} M$. In order to generate a sample with distribution proportional to the squares of the row norms of $X^{\otimes q} M$ we can first sample a column of this matrix with probability proportional to the squared column norms and then generate a row index with probability proportional to the squared values of the entries of the selected column. This process generates a random index with our desired distribution. Computing the exact column norms of $X^{\otimes q} M$ is too expensive as this matrix has $d^q$ rows, but if we apply the sketch $Q^q$ from Lemma \ref{soda-result}, we can compress the rows while preserving the column norms, in near input sparsity time, up to small error. So, it is enough to sample a column $j$ with probability proportional to the squared column norms of $Q^q X^{\otimes q} M$, which is done in lines~3-7 of the algorithm. 
	
	Given that the $j^{th}$ column of $X^{\otimes q} M$ was sampled, all we need to do is sample an entry of $X^{\otimes q} M_{\star,j}$ with probability proportional to the squared values of its entries. Note that forming this vector is out of the question since it has $d^q$ coordinates. By basic properties of tensor products, the entries of $X^{\otimes q} M_{\star,j}$ are in bijective correspondence with the entries of the matrix $X \cdot \text{diag}(M_{\star,j}) \cdot \left(X^{\otimes q-1}\right)^\top$, where entry $(i_1,i_2, \cdots i_q)$ of $X^{\otimes q} M_{\star,j}$ is equal to the entry at row $i_1$ and column $(i_2, \cdots i_q)$ of $X \cdot \text{diag}(M_{\star,j}) \cdot \left(X^{\otimes q-1}\right)^\top$. Therefore, it is enough to sample an entry of the matrix $X \cdot \text{diag}(M_{\star,j}) \cdot \left(X^{\otimes q-1}\right)^\top$ with probability proportional to its squared value. To this end, we first sample a row of this matrix with probability proportional to the squared row norms, and then sample a column by performing $\ell_2$-sampling on the sampled row. Since $X \cdot \text{diag}(M_{\star,j}) \cdot \left(X^{\otimes q-1}\right)^\top$ has a large number $d^{q-1}$ of columns, we first sketch the rows of this matrix, incurring only a factor $\left(1\pm \frac{1}{10q}\right)$ perturbation to the row norms, and then perform row norm sampling on the sketched matrix. Now we have an index $i_1 \in [d]$ sampled from the right distribution and all that is left to do is to carry out $\ell_2$-sampling on the vector $X_{i_1, \star} \cdot \text{diag}(M_{\star,j}) \cdot \left(X^{\otimes q-1}\right)^\top$. Note that we have made progress because this vector has size $d^{q-1}$ and we have reduced the size by a factor of $d$. We recursively repeat this process of reshaping the tensor product to a matrix and sampling a row of the matrix $q$ times until having all $q$ indices $i_1,i_2, \cdots i_q$. Algorithm~\ref{alg:rotatedrowsampler-poly} does this. Note that the actual procedure requires more work because we need to generate $s$ i.i.d. samples with the row norm distribution. To ensure that our runtime does not lose a multiplicative factor of $s$, resulting in $s\cdot \text{nnz}(X)$ total time, we need to do extra sketching and a random partitioning of the rows of the matrix $X$ to $\Theta({q}^{3/2} s)$ buckets.
	The formal guarantee on Algorithm~\ref{alg:rotatedrowsampler-poly} is given in the following lemma.
	
	\begin{lemma}\label{lem:rotatedrowsampler-poly}
		For any matrices $X\in \RR^{d\times n}$ and $B\in\RR^{m\times n}$, any $\lambda>0$ and any positive integers $q,s$, with high probability, Algorithm \ref{alg:rotatedrowsampler-poly} outputs a ranks-$s$ row norm sampler for $X^{\otimes q}(B^\top B + \lambda I)^{-1/2}$ (Definition \ref{def:row-samp}) in time $O\left( m^2 n + q^{15/2} s^2 n \log^3 n + q^{5/2} \log^3n \cdot \text{nnz}(X) \right)$.
	\end{lemma}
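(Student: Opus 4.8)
The plan is to prove the two halves of the lemma separately: (a) that the matrix $S$ returned by Algorithm~\ref{alg:rotatedrowsampler-poly} is, with high probability, a rank-$s$ row norm sampler in the sense of Definition~\ref{def:row-samp} for $\Phi := X^{\otimes q}(B^\top B+\lambda I)^{-1/2}$, and (b) that it is produced within the stated time. For correctness I would work in three steps, the first being to reduce to $X^{\otimes q}M$. Since $H\in\RR^{n\times d'}$ with $d'=\Theta(q\log n)$ has i.i.d.\ Gaussian entries, the Johnson--Lindenstrauss lemma gives that each row of $\Phi H=X^{\otimes q}M$ has squared norm within a $1\pm\tfrac1{10}$ factor of the corresponding row of $\Phi$ except with probability $n^{-\Theta(q)}$; after an $O(\mathrm{nnz}(X))$ preprocessing discarding the all-zero rows of $X$, a union bound over the at most $d^{O(q)}$ nonzero rows of $\Phi$ (for which the $\Theta(q\log n)$ choice of $d'$ suffices, using $\log d=O(\log n)$) makes this hold simultaneously, so also $\|X^{\otimes q}M\|_F^2\in(1\pm\tfrac1{10})\|\Phi\|_F^2$. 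It therefore suffices to output a rank-$s$ row norm sampler for $X^{\otimes q}M$ with the constant $\tfrac14$ of Definition~\ref{def:row-samp} relaxed to $\tfrac12$.

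The second step is to check that $\beta$ equals exactly $s$ times the probability that one pass of the loop in lines~12--20 emits the sampled tuple $(i_1,\dots,i_q)\in[d]^q$. Unwinding the recursion: conditioned on the sampled column $j_l\in[d']$, level $a$ forms $D^{(a-1)}=\mathrm{diag}(M_{\star,j_l})\prod_{c<a}\mathrm{diag}(X_{i_c,\star})$, draws a bucket $t$ from $\{p^a_r\}_r$, and then $i_a\in h^{-1}(t)$ from $\{q^a_i\}$; since emitting the tuple forces $t=h(i_a)$ at every level, the conditional emission probability is $\prod_{a=1}^q p^a_{h(i_a)}q^a_{i_a}$, and lines~21--32 recompute exactly these quantities for every column $j$ (with $L^{(b-1)}$ in the role of $D^{(b-1)}$) and sum $sp_j\prod_b p^*_b q^*_b$, so $\beta$ is $s$ times the total emission probability because $j_l$ is itself drawn from $\{p_j\}_j$. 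Thus the $l$-th row $\beta^{-1/2}\mathbf{e}_{(i_1,\dots,i_q)}^\top$ of $S$ carries precisely the reweighting of Definition~\ref{def:row-samp}, and the $s$ passes are i.i.d., so only the probability lower bound remains. For the third step, in the idealised sketch-free process the reshaping bijection converts $\ell_2$-sampling of $X^{\otimes q}M_{\star,j}$ into row-norm sampling from $XD^{(a-1)}(X^{\otimes(q-a)})^\top$ at level $a$; the level-$a$ numerator $\|X_{i_a,\star}D^{(a-1)}(X^{\otimes(q-a)})^\top\|_2^2$ for choice $i_a$ coincides with the level-$(a{+}1)$ denominator $\|XD^{(a)}(X^{\otimes(q-a-1)})^\top\|_F^2$ (with the convention that the level-$(q{+}1)$ denominator is the scalar $(X^{\otimes q}M)_{(i_1,\dots,i_q),j}^2$), so the product over $a=1,\dots,q$ telescopes to $(X^{\otimes q}M)_{(i_1,\dots,i_q),j}^2/\|(X^{\otimes q}M)_{\star,j}\|_2^2$, and combined with the ideal column probability $\|(X^{\otimes q}M)_{\star,j}\|_2^2/\|X^{\otimes q}M\|_F^2$ and summed over $j$ it equals $\|(X^{\otimes q}M)_{(i_1,\dots,i_q),\star}\|_2^2/\|X^{\otimes q}M\|_F^2$. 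Algorithm~\ref{alg:rotatedrowsampler-poly} departs from this ideal only through three norm approximations: it uses $P_a=Q^q(X^{\otimes(q-a)}\otimes E_1^{\otimes a})$ for $X^{\otimes(q-a)}$ (the $E_1^{\otimes a}$-padding preserves every relevant norm exactly and $Q^q$ of Lemma~\ref{soda-result} with $\epsilon=\tfrac1{10q}$ preserves it up to $1\pm\tfrac1{10q}$), $W_r=G_rX_{h^{-1}(r),\star}$ for $X_{h^{-1}(r),\star}$ (a JL map of the same distortion), and $p_j$ uses $\|Q^q(X^{\otimes q}M)_{\star,j}\|_2^2$. With $\delta=n^{-\Theta(1)}$ and a union bound over all $j\in[d']$ and all partial samples $(i_1,\dots,i_{a-1},i)$ — at most $d^{O(q)}$ events, absorbed by the $q\cdot\mathrm{polylog}(n)$ sizes of $d',m',n'$ — all approximations hold simultaneously w.h.p., and composing $O(q)$ multiplicative errors of magnitude $\tfrac1{10q}$ leaves the total distortion a universal constant; hence the emission probability is at least $\tfrac12\,\|(X^{\otimes q}M)_{(i_1,\dots,i_q),\star}\|_2^2/\|X^{\otimes q}M\|_F^2$, which by the first step is $\ge\tfrac14\,\|\phi_{(i_1,\dots,i_q)}\|_2^2/\|\Phi\|_F^2$.

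For the runtime I would account for the lines in groups. Line~2 computes $M=(B^\top B+\lambda I)^{-1/2}H$ from a thin SVD of the rank-$\le m$ matrix $B$ in $O(m^2n+mnd')=O(m^2n)$ up to lower-order terms. Line~4 calls Lemma~\ref{soda-result} once per column of $X$ with $\epsilon=\tfrac1{10q}$ and $\delta=1/\mathrm{poly}(n)$, totalling $O(q^6n\,\mathrm{polylog}\,n+q^{5/2}\log n\cdot\mathrm{nnz}(X))$, and lines~10--11 build the $W_r$ in $O(n'\cdot\mathrm{nnz}(X))=O(q^2\log n\cdot\mathrm{nnz}(X))$. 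The bottleneck is the re-weighting loop of lines~21--32: for each of the $s$ samples, each of the $d'=\Theta(q\log n)$ columns, and each of the $q$ levels, the denominator $\sum_{t=1}^{s'}\|W_tL^{(b-1)}P_b^\top\|_F^2$ costs $O(s'\,n'\,n\,m')$ with $s'=\Theta(q^{3/2}s)$ and $n'=m'=\Theta(q^2\log n)$, for a total of $O(s\cdot d'\cdot q\cdot s'\,n'\,n\,m')=O(q^{15/2}s^2n\log^3n)$ (the bucket-distribution computation in lines~14--15 is of smaller order). Finally the in-bucket $\ell_2$-sampling of lines~17--18 touches $\mathrm{nnz}(X_{h^{-1}(t),\star})$ entries, and because the fully random hash $h$ scatters the rows of $X$ (weighted by their sparsities) across $s'=\lceil q^{3/2}s\rceil$ buckets we get $\mathbb{E}_h[\max_t\mathrm{nnz}(X_{h^{-1}(t),\star})]=O(\mathrm{nnz}(X)/(q^{3/2}s)+\max_i\mathrm{nnz}(X_{i,\star})\log n)$, so over the $sq$ (sample, level) pairs this contributes only $O(q^{3/2}\log n\cdot\mathrm{nnz}(X))$ to the input-sparsity term and $O(q^3sn\log^2n)$ otherwise. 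Summing, everything fits inside $O(m^2n+q^{15/2}s^2n\log^3n+q^{5/2}\log^3n\cdot\mathrm{nnz}(X))$.

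I expect the main obstacle to be the interplay, in the third step, between the randomness of the three sketching ingredients ($H$, the $P_a$ built from $Q^q$, and the $W_r$ built from $G_r$) and the sampled indices $j_l,i_1,\dots,i_q$, which are themselves functions of those sketches: since Definition~\ref{def:row-samp} constrains the probability of \emph{every} index, the norm-preservation inequalities must be certified for all possible sampled trajectories at once. This is exactly why each sketch is run at distortion $\Theta(1/q)$ — so that $q$ of them compose to $\Theta(1)$ distortion — and with failure probability $1/\mathrm{poly}(n)$, so that the union bound over the $d^{O(q)}$ trajectories is affordable. The remaining non-routine points are the bookkeeping that $\beta$ is literally the exact emission probability (so that no sketching error leaks into the reweighting $\beta^{-1/2}\mathbf{e}_{(i_1,\dots,i_q)}^\top$ itself), and the balls-in-bins estimate that the $q^{3/2}s$-way partition keeps the $\mathrm{nnz}(X)$ term free of a spurious multiplicative $s$.
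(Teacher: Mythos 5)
Your proposal follows essentially the same route as the paper's proof: the same reduction via the Gaussian matrix $H$ (you apply it at the start, the paper at the end -- a cosmetic difference), the same observation that $\beta$ is exactly $s$ times the emission probability so that no sketching error enters the reweighting, the same telescoping of the conditional probabilities via the reshaping identity $\|X^{\otimes(q-a-1)}D^{(a)}X^\top\|_F^2=\|X^{\otimes(q-a)}D^{(a-1)}X_{i_a,\star}^\top\|_2^2$, the same three sketches run at distortion $\Theta(1/q)$ so that $O(q)$ of them compose to a constant, the same Bernstein/balls-in-bins bound on $\text{nnz}(X_{h^{-1}(t),\star})$ to keep the $\text{nnz}(X)$ term free of a factor $s$, and essentially the same line-by-line runtime accounting (your per-term constants differ harmlessly and stay within the stated bound).

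The one point where you diverge is the scope of the union bound for the $Q^q$ and $G_r$ guarantees, and there your specific quantitative claim does not hold as written. You correctly identify that, because $D^{(a-1)}$ depends on the partial trajectory and Definition~\ref{def:row-samp} constrains \emph{every} index, the norm-preservation events should be certified for all $d^{O(q)}$ prefixes; but with $m'=\Theta(q^2\log n)$ and $n'=\Theta(q^2\log n)$ at distortion $1/(10q)$ the per-event failure probability is only $n^{-\Theta(1)}$, not $n^{-\Theta(q)}$, so a union bound over $d^{O(q)}$ trajectories is not "absorbed" by these parameters (only the $H$ sketch, with constant distortion and $d'=\Theta(q\log n)$, enjoys the $n^{-\Theta(q)}$ bound needed for its $d^q$-fold union bound). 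Making your version airtight would require $\log(1/\delta)=\Theta(q\log n)$ in Lemma~\ref{soda-result} and $n'=\Theta(q^3\log n)$, which changes the $q$-dependence of the runtime. Note that the paper itself union bounds these two guarantees over only $O(qs'd')$ and $O(qd'd)$ events, treating $D^{(a-1)}$ as fixed per $(a,j)$, so it does not pay for the trajectory dependence you are worried about; your write-up is more explicit about the adaptivity issue, but the resolution you assert is stronger than what the algorithm's stated sketch dimensions deliver.
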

	\begin{proof}
		All rows of the sampling matrix $S\in\RR^{s\times d^q}$ (output of Algorithm \ref{alg:rotatedrowsampler-poly}) have independent and identical distributions because the algorithm generates i.i.d. samples  $j_1,j_2, \cdots j_s$ in line~7 and then for each  $l\in[s]$, the $l^{th}$ row of the matrix $S$ is constructed by sampling $i_1,i_2,\cdots i_q$ in line~18 from a distribution that is solely determined by $j_l$ and is independent of the values of $j_{l'}$ for $l'\neq l$.
		
		Since every row of $S$ is identically distributed, let us consider the distribution of the $l^{th}$ row of $S$ for some arbitrary $l\in[s]$. Let $J$ be a random variable that takes values in $\{1,2, \cdots d'\}$ with probability distribution $\{p_i\}_{i=1}^{d'}$ defined in line~6 of Algorithm \ref{alg:rotatedrowsampler-poly}. The random index $j_l$ generated in line~7 of the algorithm is a copy of the random variable $J$.
		For any $j\in[d']$, let $I^j=(I^j_1,I^j_2, \cdots I^j_q)$ be a vector-valued random variable that takes values in $[d]^q$ with the following conditional probability distribution for every $a=1,2, \cdots q$,
		\begin{align*}
		&\Pr[I^j_a =i | I^j_1=i_1, I^j_2=i_2, \cdots I^j_{a-1}=i_{a-1}]\\
		& = \frac{\|W_{h(i)} \cdot D^{j,a-1} P_a^\top \|_F^2}{\sum_{t=1}^{s'} \|W_{t}  D^{j,a-1} P_a^\top \|_F^2}  \frac{\| X_{i,\star} D^{j,a-1}P_a^\top \|_2^2}{\| X_{h^{-1}(h(i)),\star} D^{j,a-1}P_a^\top \|_F^2},
		\end{align*}
		where $W_r$ for every $r\in[s']$ are the matrices defined in line~11 of the algorithm and $D^{j,a-1}$ is a diagonal matrix of size $n\times n$ whose diagonal entries are $D^{j,a-1}_{rr} = M_{r,j} \cdot \prod_{b=1}^{a-1} X_{i_b,r}$,
		for every $r \in [n]$ and $a\in[q]$. For ease of notation we drop the superscript $j$ and just write $D^{a-1}$.
		One can verify that the vector random variable $(i_1,i_2, \cdots i_q)$ obtained by stitching together the random indices generated in line~18 of the algorithm, is a copy of the random variable $I^{j_l}$.
		
		Let $\beta$ be the quantity that the for loop in lines~21-30 of the algorithm computes. If $i_1,i_2, \cdots i_q \in[d]$ are the indices sampled in line~18 of the algorithm, then the value of $\beta$ can be computed as, $\beta = s\sum_{j=1}^{d'} p_j \prod_{b=1}^q p^*_b q^*_b$,
		where $p^*_b = \frac{\|W_{h(i_b)} \cdot D^{b-1} P_b^\top \|_F^2}{\sum_{t=1}^{s'} \|W_{t}  D^{b-1} P_b^\top \|_F^2}$ and $q^*_b = \frac{\| X_{i_b,\star} D^{b-1}P_b^\top \|_2^2}{\| X_{h^{-1}(h(i_b)),\star} D^{b-1}P_b^\top \|_F^2}$ are the quantities computed in lines~25 and 26 of the algorithm.
		Hence, for any $i_1,i_2, \cdots i_q \in [d]$, the distribution of $S_{l,\star}$ is,
		\begin{align}
		&\Pr\left[ S_{l,\star}= \beta^{-1/2} ({\bf e}_{i_1} \otimes {\bf e}_{i_2} \otimes \cdots {\bf e}_{i_q})^\top \right]\nonumber\\
		&= \sum_{j=1}^{d'} \Pr\left[ \left. S_{l,\star}= \beta^{-1/2} ({\bf e}_{i_1} \otimes \cdots {\bf e}_{i_q})^\top \right| J=j \right] \cdot p_j \nonumber\\
		&= \sum_{j=1}^{d'} \Pr\left[ I^j=(i_1 , i_2, \cdots i_q) \right] \cdot p_j. \label{dist-sampling-matrix-poly}
		\end{align}
		By the law of total probability, we have $\Pr\left[ I^j=(i_1 , i_2, \cdots i_q) \right] = \prod_{a=1}^{q} \Pr [I^j_a =i_a | I^j_1=i_1, \cdots I^j_{a-1}=i_{a-1} ]$, and therefore, because $\Pr \left[I^j_a =i_a | I^j_1=i_1, \cdots I^j_{a-1}=i_{a-1} \right] = p^*_a q^*_a$, we find that 
		\[\Pr\left[ S_{l,\star}= \beta^{-1/2} ({\bf e}_{i_1} \otimes {\bf e}_{i_2} \otimes \cdots {\bf e}_{i_q})^\top \right] = \frac{\beta}{s}.\] 
		Now note that for any $r\in[s']$, $W_r$ is defined as $W_r=G_r \cdot X_{h^{-1}(r),\star}$ where $G_r$ is a matrix with i.i.d. normal entries with $n'=C_3q^2\log_2n$ rows. Therefore, $G_r$ is a JL-transform and for every $a\in[q], r\in[s']$, with high probability, i.e.,
		\begin{equation}\label{jl-poly}
		\frac{\|W_{r} D^{a-1} P_a^\top \|_F^2}{n'} \in \frac{\| X_{h^{-1}(r),\star} D^{a-1} P_a^\top \|_F^2}{1 \pm {1}/{10q}}.
		\end{equation}
		For a simple proof of \eqref{jl-poly}, see \cite{dasgupta2003elementary} (see also \cite{kane2014sparser} for a more efficient version). By union bounding over $qs'd'$ events, \eqref{jl-poly} holds simultaneously for all $a\in[q]$, $j\in[d']$, and $r\in[s']$ with high probability. We condition on \eqref{jl-poly} holding in what follows. We can bound the conditional probability of $I^j_a $ as follows, 
		\begin{align}
		&\Pr[I^j_a =i | I^j_1=i_1, I^j_2=i_2, \cdots I^j_{a-1}=i_{a-1}]\nonumber\\ 
		&\qquad\ge \left( 1-{1}/{5q} \right) \cdot \frac{ \| X_{i,\star} D^{a-1}P_a^\top \|_2^2}{ \|X D^{a-1} P_a^\top \|_F^2}.\label{eq:Ij-lower-bnd}
		\end{align}
		For every $a\in[q]$, line~4 of the algorithm computes $P_a=Q^q\left( X^{\otimes(q-a)} \otimes E_1^{\otimes a} \right)$, where $Q^q$ is the sketch from Lemma \ref{soda-result} with $\epsilon=\frac{1}{10q}$. By basic properties of tensor products, for every $i\in[d]$, 
		\begin{align*}
		P_a D^{a-1} X_{i,\star}^\top &= Q^q\left( X^{\otimes(q-a)} \otimes E_1^{\otimes a} \right) D^{a-1} X_{i,\star}^\top\\ 
		&= Q^q\left( \left(X^{\otimes(q-a)} D^{a-1} X_{i,\star}^\top\right) \otimes {\bf e}_1^{\otimes a} \right).
		\end{align*}
		Hence, by Lemma \ref{soda-result}, for every $a\in[q]$ and every $i\in[d]$, with high probability,
		\begin{equation}\label{tensorsketch}
		\left\| X_{i,\star} D^{a-1}P_a^\top \right\|_2^2 \in \frac{\left\| X^{\otimes(q-a)} D^{a-1} X_{i,\star}^\top \right\|_2^2}{1\pm 0.1/q}.
		\end{equation}
		By union bounding over $qd'd$ events, with high probability, \eqref{tensorsketch} holds simultaneously for all $a\in[q]$, all $j\in[d']$, and all $i\in[d]$. Therefore, conditioning on \eqref{tensorsketch} holding and using \eqref{eq:Ij-lower-bnd}, the conditional probability of $I^j_a $ satisfies 
		\begin{align}
		&\Pr[I^j_a =i | I^j_1=i_1, I^j_2=i_2, \cdots I^j_{a-1}=i_{a-1}]\nonumber\\
		&\qquad\ge \left(1-{2}/{5q}\right)  \frac{ \| X^{\otimes(q-a)} D^{a-1} X_{i,\star}^\top \|_2^2}{ \|X^{\otimes(q-a)} D^{a-1} X^\top \|_F^2}.\label{lower-bnd-Ij}
		\end{align}
		It follows from the definition of tensor products and definition of $D^a$, that
		\begin{align*}
		&\left\| X^{\otimes(q-a-1)} D^{a} X^\top \right\|_F^2\\ &= \left\| X^{\otimes(q-a-1)} D^{a-1}\cdot \text{diag}(X_{i_a,\star}) X^\top \right\|_F^2\\
		&= \left\| X^{\otimes(q-a)} D^{a-1} X_{i_a,\star}^\top \right\|_2^2
		\end{align*}
		Using this equality and inequality \eqref{lower-bnd-Ij},
		\begin{align}
		&\Pr\left[ I^j=(i_1 , i_2, \cdots i_q) \right]\nonumber\\
		&= \prod_{a=1}^{q} \Pr \left[I^j_a =i_a | I^j_1=i_1, \cdots I^j_{a-1}=i_{a-1} \right]\nonumber\\
		&\ge \prod_{a=1}^{q} \left(1-\frac{2}{5q}\right)  \frac{ \| X^{\otimes(q-a)} D^{a-1} X_{i_a,\star}^\top \|_2^2}{ \|X^{\otimes(q-a)} D^{a-1} X^\top \|_F^2}\nonumber\\
		& \ge \frac{1}{2}  \frac{ \| X^{\otimes(0)} D^{q-1} X_{i_q,\star}^\top \|_2^2}{ \|X^{\otimes(q-1)} D^{0} X^\top \|_F^2}\nonumber\\ 
		&= \frac{1}{2}  \frac{ \left| [X^{\otimes q} \cdot M ]_{(i_1,i_2,\cdots i_q),j} \right|^2}{ \|[X^{\otimes q} \cdot M]_{\star,j} \|_2^2} \label{conditional-prob-bound}
		\end{align}
		By plugging \eqref{conditional-prob-bound} back in 
		\eqref{dist-sampling-matrix-poly} we find that,
		\begin{align*}
		&\Pr\left[ S_{l,\star}= \beta^{-1/2} ({\bf e}_{i_1} \otimes {\bf e}_{i_2} \otimes \cdots {\bf e}_{i_q})^\top \right]\\
		&\qquad\ge \sum_{j=1}^{d'} \frac{1}{2} \cdot \frac{ \left| [X^{\otimes q} \cdot M ]_{(i_1,i_2,\cdots i_q),j} \right|^2}{ \|[X^{\otimes q} \cdot M]_{\star,j} \|_2^2} \cdot p_j\\
		&\qquad= \frac{1}{2}\sum_{j=1}^{d'} \frac{ \left| [X^{\otimes q} \cdot M ]_{(i_1,i_2,\cdots i_q),j} \right|^2}{ \|X^{\otimes q}  M_{\star,j} \|_2^2} \frac{ \|Q^q X^{\otimes q} M_{\star,j}\|_2^2 }{\|Q^q X^{\otimes q} M\|_F^2}\\
		&\qquad\ge \frac{1}{3} \sum_{j=1}^{d'} \frac{ \left| [X^{\otimes q} \cdot M ]_{(i_1,i_2,\cdots i_q),j} \right|^2}{ \|X^{\otimes q}  M_{\star,j} \|_2^2} \frac{ \| X^{\otimes q} M_{\star,j}\|_2^2 }{\|X^{\otimes q} M\|_F^2}\\
		&\qquad = \frac{1}{3} \frac{ \left\| [X^{\otimes q} \cdot M ]_{(i_1,i_2,\cdots i_q),\star} \right\|^2}{\|X^{\otimes q} M\|_F^2}
		\end{align*}
		Matrix $M$ is defined as $M=(B^\top B +\lambda I)^{-1/2} \cdot H$ where $H$ is a random matrix with i.i.d. Gaussian entries with $d'=C_1q\log_2n$ columns. Therefore, $H$ is a JL-transform, so for every $(i_1,i_2,\cdots i_q) \in[d]^q$, with probability $1-\frac{1}{\text{poly}(n^q)}$,
		\begin{align*}
		&\frac{\left\| [X^{\otimes q} \cdot M ]_{(i_1,i_2,\cdots i_q),\star} \right\|_2^2}{d'}\\
		&\qquad \in \left(1\pm 0.1 \right) \left\| [X^{\otimes q} ]_{(i_1,i_2,\cdots i_q),\star}  (B^\top B +\lambda I)^{-1/2}\right\|_2^2.
		\end{align*}
		Therefore, by union bounding over $d^q$ rows of $X^{\otimes q} M$, the above holds simultaneously for all $(i_1,i_2,\cdots i_q) \in[d]^q$ with high probability. Therefore,
		\begin{align*}
		&\Pr\left[ S_{l,\star}= \beta^{-1/2} ({\bf e}_{i_1} \otimes {\bf e}_{i_2} \otimes \cdots {\bf e}_{i_q})^\top \right]\\
		&\qquad\ge \frac{1}{4} \cdot \frac{ \left\| \left[X^{\otimes q} \cdot (B^\top B +\lambda I)^{-1/2} \right]_{(i_1,i_2,\cdots i_q),\star} \right\|_2^2}{\|X^{\otimes q} (B^\top B +\lambda I)^{-1/2}\|_F^2}
		\end{align*}
		Because $\frac{\beta}{s}$ is the probability of sampling row $(i_1,i_2,\cdots i_q)$ of the matrix $X^{\otimes q} (B^\top B +\lambda I)^{-1/2}$, the above inequality proves that with high probability, $S$ is a rank-$s$ row norm sampler for $X^{\otimes q} (B^\top B +\lambda I)^{-1/2}$ as in Definition \ref{def:row-samp}. 
		
		\paragraph{Runtime:} One of the expensive steps of this algorithm is the computation of $M$ in line~2 which takes $O(m^2 n + q m n \log n)$ operations since $B$ is rank $m$. Another expensive step is the computation of $P_j$ for $j=0,1,\cdots q-1$ in line~4. By Lemma \ref{soda-result}, this can be computed in time $O\left( q^6 n \log^4n + q^{5/2} \log n \cdot \text{nnz}(X) \right)$. Matrices $W_r$ for all $r\in[s']$ in line~11 of the algorithm can be computed in time $O(q^{2} \log n \cdot \text{nnz}(X))$.
		Computing the distribution $\{ p^a_r \}_{r=1}^{s'}$ in line~15 takes time $O\left(q^{11/2} s n \log^2n\right)$ for a fixed $a\in[q]$ and $l\in[s]$.  Therefore, the total time to compute this distribution for all $a$ and $l$ is $O\left(q^{13/2} s^2 n \log^2n\right)$. 
		
		The runtime to compute the distribution $\{ q^q_i \}_{i\in h^{-1}(t)}$ in line~17 depends on the sparsity of $X_{h^{-1}(t),\star}$, i.e.,  $\text{nnz}(X_{h^{-1}(t),\star})$. To bound the sparsity of $X_{h^{-1}(t),\star}$, note that, $\text{nnz}(X_{h^{-1}(t),\star}) = \sum_{i=1}^d \mathbf{1}_{\{i \in h^{-1}(t)\}} \cdot \text{nnz}(X_{i,\star})$.
		Let us introduce the random variables $S_1,S_2,\cdots S_d$ defined as $S_i = \left(\mathbf{1}_{\{i \in h^{-1}(t)\}} - \frac{1}{s'}\right) \cdot \text{nnz}(X_{i,\star})$ for $i\in[d]$.
		Since the hash function $h$ is fully independent, the random variables $S_1,S_2,\cdots S_d$ are independent. Also, each of these random variables is zero mean and uniformly bounded, i.e., $\EE[S_i]=0$  and $|S_i|\le n$ for each $i\in [d]$. Therefore we can invoke Bernstein's inequality (Appendix A). Let $Z=\sum_{i=1}^d S_i$. Then the variance of the sum is bounded as
		$\sum_{i=1}^d \EE[S_i^2] \le \sum_{i=1}^d \frac{1}{s'} \text{nnz}(X_{i,\star})^2 \le \frac{n}{s'} \cdot \text{nnz}(X)$.
		
		By invoking Bernstein's inequality, for some constant $C$, $\Pr \left[ |Z| \ge C\log_2 n \cdot \left( \sqrt{\frac{n}{s'} \cdot \text{nnz}(X)} + n \right) \right] \le \frac{1}{\text{poly}(n)}$.
		Hence, for every $t\in[s']$,  with high probability $\text{nnz}(X_{h^{-1}(t),\star}) = O\left( \left( {\text{nnz}(X)/s'} + n \right) \log n \right)$. By union bounding over $s'$ events, with high probability, $\text{nnz}(X_{h^{-1}(t),\star}) = O\left( \left( {\text{nnz}(X)/s'} + n \right) \log n \right)$, simultaneously for all $t\in[s']$
		which implies that the distribution $\{ q^q_i \}_{i\in h^{-1}(t)}$ in line~17 of the algorithm can be computed in time $O\left( q^2 n \log^2 n + q^2 \log^2 n \cdot {\text{nnz}(X)/s'} \right)$ for a fixed $a \in [q]$ and a fixed $l \in [s]$.  Therefore the total time to compute this distribution for all $a$ and all $l$ is $O\left( q^3 s n \log^2 n + q^{3/2} \log^2 n \cdot {\text{nnz}(X)} \right)$. 
		
		Finally the last expensive step is the computation of quantities $p^*_b$ and $q^*_b$ in lines~25 and 26 of the algorithm. Both of these quantities can be computed in time $O\left( q^{11/2} s n \log^2 n + q^2 \log^2n \cdot \text{nnz}(X)/s' \right)$ for a fixed $j \in [d']$ and a fixed $b\in[q]$. Therefore the total time to compute these quantities for all $l$, all $j$, and all $b$ is $O\left( q^{15/2} s^2 n \log^3 n + {q^{5/2}} \log^3n \cdot \text{nnz}(X) \right)$.
		Therefore the total runtime of Algorithm~\ref{alg:rotatedrowsampler-poly} is $O\left( m^2 n + q^{15/2} s^2 n \log^3 n + q^{5/2} \log^3n \cdot \text{nnz}(X) \right)$.
	\end{proof}
	We prove Theorem \ref{thm:poly} in Appendix E.
	
	\subsection{Adaptive Sampling for the Gaussian Kernel}
	Consider the lifting corresponding to the Gaussian kernel, $k(x,y)=e^{-\|x-y\|_2^2/2}$, that can be obtained through a Taylor expansion. This feature mapping was exploited in \cite{ahle2019oblivious} to obtain an efficient subspace embedding for the Gaussian kernel via sketching the polynomial terms in its Taylor expansion. For datasets with bounded radius, the Gaussian kernel can be well-approximated by a superposition of low-degree polynomial kernels. We formally define this approximate feature mapping (lifting) as follows.
	
	\begin{definition}[Polynomial Lifting for Gaussian Kernel]\label{def:poly-feature-Gauss}
		For any integer $q$ the \emph{degree-$q$ polynomial lifting for Gaussian kernel} is the mapping $\phi_q:\RR^d \rightarrow \RR^D$, defined as,
		\[	\phi_q(x) = e^{-\|x\|_2^2/2}
		\left(\frac{x^{\otimes 0}}{\sqrt{0!}} \oplus \frac{x^{\otimes 1}}{\sqrt{1!}} \oplus \frac{x^{\otimes 2}}{\sqrt{2!}} \oplus \cdots \frac{x^{\otimes q}}{\sqrt{q!}}\right),\]
		for $x\in\RR^d$, where $D=\sum_{j=0}^q d^j$.
	\end{definition}
	
	\begin{claim}\label{claim:poly-lift-gauss}
		Let $x_1, x_2, \cdots x_n \in\RR^d$ be a dataset with bounded radius, i.e.,  $\|x_i\|_2^2\le r$ for all $i\in[n]$. Suppose $K\in\RR^{n\times n}$ is the Gaussian kernel corresponding to this dataset ($K_{i,j} = e^{-\|x_i-x_j\|_2^2/2}$). Also suppose that $\phi_q$ is the degree-$q$ polynomial lifting for the Gaussian kernel as in Definition \ref{def:poly-feature-Gauss}. If $A$ is a matrix with $n$ columns whose columns are obtained by applying the map $\phi_q$ on the data points, i.e., $A_{\star,i} = \phi_q(x_i)$, then as long as $q=\Omega\left(r+\log n \right)$, we have $\|A^\top A - K\|_{op} \le \frac{1}{\text{poly}(n)}$.
	\end{claim}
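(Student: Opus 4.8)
The plan is to reduce the operator-norm statement to an entrywise estimate and then to bound each entry of $A^\top A - K$ by the tail of the Taylor series of the exponential. First I would compute the $(i,j)$ entry of $A^\top A$: by the definition of $\phi_q$ and the basic tensor identity $\langle x^{\otimes l}, y^{\otimes l}\rangle = \langle x,y\rangle^l$ (which also holds for $l=0$, giving the constant term), we obtain
\[
(A^\top A)_{i,j} = \langle \phi_q(x_i), \phi_q(x_j)\rangle = e^{-\|x_i\|_2^2/2}\,e^{-\|x_j\|_2^2/2}\sum_{l=0}^{q}\frac{\langle x_i, x_j\rangle^l}{l!}.
\]
On the other hand $K_{i,j} = e^{-\|x_i-x_j\|_2^2/2} = e^{-\|x_i\|_2^2/2}\,e^{-\|x_j\|_2^2/2}\,e^{\langle x_i,x_j\rangle}$, and expanding $e^{\langle x_i,x_j\rangle} = \sum_{l\ge 0}\langle x_i,x_j\rangle^l/l!$ shows that $(A^\top A)_{i,j}$ is exactly the degree-$q$ truncation of $K_{i,j}$. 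Hence
\[
(A^\top A - K)_{i,j} = -\,e^{-\|x_i\|_2^2/2}\,e^{-\|x_j\|_2^2/2}\sum_{l=q+1}^{\infty}\frac{\langle x_i,x_j\rangle^l}{l!}.
\]

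Next I would estimate this tail. Since $\|x_i\|_2^2 \le r$ for all $i$, the prefactor $e^{-\|x_i\|_2^2/2}e^{-\|x_j\|_2^2/2}$ is at most $1$, and by Cauchy--Schwarz $|\langle x_i,x_j\rangle| \le \|x_i\|_2\|x_j\|_2 \le r$. Thus $|(A^\top A - K)_{i,j}| \le \sum_{l=q+1}^{\infty} r^l/l!$. Assuming $q \ge 2r$, consecutive terms shrink by a factor $r/(l+1) \le \tfrac12$, so the tail is bounded by $2\,r^{q+1}/(q+1)!$; using $(q+1)! \ge ((q+1)/e)^{q+1}$ this is at most $2\big(er/(q+1)\big)^{q+1} \le 2^{-q}$ once $q \ge 2er$. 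Choosing $q = \Omega(r + \log n)$ with a sufficiently large constant therefore makes $|(A^\top A - K)_{i,j}| \le \frac{1}{\text{poly}(n)}$ for every $i,j$, which is precisely the range of $q$ asserted in the claim.

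Finally I would pass from the entrywise bound to the operator-norm bound via the crude inequality $\|E\|_{op} \le \|E\|_F \le n \max_{i,j}|E_{i,j}|$, valid for any $n \times n$ matrix $E$, applied to $E = A^\top A - K$; this yields $\|A^\top A - K\|_{op} \le n\cdot\frac{1}{\text{poly}(n)} = \frac{1}{\text{poly}(n)}$, the extra factor of $n$ being absorbed into the polynomial at the cost of a slightly larger constant in $q = \Omega(r+\log n)$. I do not anticipate a real obstacle: the only mildly delicate point is the factorial tail estimate, which is a routine ratio-test plus Stirling argument, and the fact that we have an entire $\text{poly}(n)$ factor of slack makes the lossy operator-norm-to-entrywise step harmless.
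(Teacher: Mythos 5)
Your proof is correct and is exactly the standard truncated-Taylor argument the paper relies on (the paper states Claim 3 without an explicit proof, deferring to the Taylor-expansion lifting of Ahle et al.): expand $K_{i,j}=e^{-\|x_i\|_2^2/2}e^{-\|x_j\|_2^2/2}e^{\langle x_i,x_j\rangle}$, observe $(A^\top A)_{i,j}$ is its degree-$q$ truncation, bound the factorial tail using $|\langle x_i,x_j\rangle|\le r$, and pass to the operator norm via the Frobenius norm, absorbing the extra $n$ factor into $q=\Omega(r+\log n)$. No gaps.
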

	
	Therefore, to find a spectral approximation to the Gaussian kernel $K$ for bounded datasets, it is enough to find a spectral approximation to $A^\top A$, where $A$ is the matrix defined in the above claim. 
	We have designed an efficient adaptive sampling method for tensor products of the form $X^{\otimes j}$ in the previous section. Since matrix $A$ is a concatenation of tensor products $X^{\otimes j}$ for $j=0,1, \cdots q$, using our iterative leverage score sampling procedure for the polynomial kernel we can spectrally approximate $A^\top A$ in nearly $\text{nnz}(X)$ time. We present a full algorithm which can perform recursive leverage score sampling on matrix $A$ and analyze it in Appendix F and prove Theorem~\ref{thm:gauss} in Appendix G.
	
	\subsection{Generalization to dot-product Kernels}\label{sec:dot-prod}
	An important technical contribution of this paper is a sampling method that can embed the polynomial kernel using near-optimal runtime. Additionally, our method can be used for embedding a wide class of kernels that can be well-approximated by low-degree polynomials. In particular, our sampling method can be applied to any dot-product kernel with a rapidly convergent Taylor expansion. In this section, we argue how our method can be generalized to such kernels.
	
	The underlying observation that enables us to extend our subspace embedding to the class of dot-product kernels is a classical result in harmonic analysis due to \citet{schoenberg1988positive}, that characterizes positive definite functions in a Hilbert space. This observation is simply the fact that any dot-product kernel $k:\RR^d \times \RR^d \to \RR$ defined as $k(x,y) = f(\langle x, y \rangle)$ must have a Taylor expansion with only non-negative coefficients, i.e., $k$ is a kernel function if and only if $f(\alpha) = \sum_{j=0}^\infty a_j \alpha^j$, $a_j \ge 0$ for all $j\in \mathbb{Z}$. As a result, truncating this sum at any point results in a valid kernel, that is $k_q(x,y):=\sum_{j=0}^q a_j \langle x, y \rangle^j$ is a valid positive definite kernel. 
	
	For most dot-product kernels used in practice, the coefficients $a_j$ decay at least exponentially. If this is the case, then $|k_q(x,y) - k(x,y)| \le \frac{1}{\text{poly}(n)}$ for any $x,y \in \RR^d$ with $\|x\|_2^2 , \|y\|_2^2 \le r$ and $q = \Omega(r\log n)$. Hence, in order to obtain a subspace embedding for kernel $k$ on any dataset with bounded $\ell_2$ radius, it is enough to find a subspace embedding for the truncated kernel $k_q(x,y) = \sum_{j=0}^q a_j \langle x, y \rangle^j$. Since this kernel is a superposition of polynomial kernels, we can apply our subspace embedding for the polynomial kernel from Section~\ref{sec:polynomial} to each of the polynomial terms. This will result in a near input sparsity time subspace embedding for any dot-product kernel whose Taylor expansion decays at least exponentially.
	
	An example of a well known dot product kernel is the inverse polynomial kernel defined as $k(x,y) = \frac{1}{2 - \langle x,y \rangle}$. The Taylor expansion of this kernel is $k(x,y) = \sum_{j=0}^\infty 2^{-j-1} \langle x,y \rangle^j$. Therefore, if we let $q = \Theta(\log n)$ then for any $x,y \in \RR^d$ with $\|x\|_2^2 , \|y\|_2^2 \le 1$, $|k_q(x,y) - k(x,y)| \le \frac{1}{\text{poly}(n)}$, where $k_q(x,y) = \sum_{j=0}^q 2^{-j-1} \langle x,y \rangle^j$. Hence, we can obtain a subspace embedding for the inverse polynomial kernel in nearly $\text{nnz}(X)$ time by applying our sampling method from Section~\ref{sec:polynomial} to polynomials of degree $O(\log n)$ in this Taylor expansion.
	
	\section{Experiments}
	In this section we assess the performance of our result for embedding the Gaussian kernel (Theorem~\ref{thm:gauss}) against the Fourier features ({\bf FF}) method \cite{rahimi2008random}, {\bf Nystrom} method \cite{musco2017recursive}, as well as the {\bf Oblivious} sketching method of \cite{ahle2019oblivious}. The results are summarized in Table \ref{experiment-table}\footnote{We repeated the experiments with 5 different random seeds and reported the average RMSE and runtime in Table~\ref{experiment-table}.}. Our importance sampling algorithm is a recursive procedure given in Algorithm \ref{alg:outerloop}. In this set of experiments, we also consider a variant of our sampling algorithm that runs only a single round of the recursive sampling and hence is considerably faster. This variant is equivalent to sampling rows of the lifting matrix $\Phi$ with probabilities proportional to the squared row norms. We denote this variant of our method by {\bf Row norm} and denote the full recursive importance sampling algorithm by {\bf Adaptive}. 
	The target dimension of all methods is denoted by $s$ in Table \ref{experiment-table}.
	
	We base our comparison on the four standard large-scale regression datasets evaluated in \cite{le2013fastfood}. The size of the data points is denoted by $n$ and the dimensionality is denoted by $d$ in Table \ref{experiment-table}. In all experiments, we first find a low-rank approximation to the kernel matrix using various feature sampling/sketching techniques. Then, using the kernel’s proxy, we find an approximate regressor by solving an $\ell_2$ regularized least-squares problem. For all methods, Table~\ref{experiment-table} reports the total time to train the regressors, including the runtime of feature sampling and the runtime of linear regression. We use the same hyperparameters (kernel bandwidth and regularization parameter) across all kernel approximation methods which were selected via cross-validation on the Fourier features method, as our baseline method. For every method, we set the number $s$ of features to the smallest value such that increasing the number of features does not improve the error non-negligibly.
	
	The \emph{Row norm} variant of our method is as fast as the \emph{FF} method and runs significantly faster than the \emph{Nystrom} and \emph{Oblivious} methods while having superior testing RMSE. Our  full algorithm, \emph{Adaptive}, has even better performance than our single round variant \emph{Row norm} in terms of RMSE on the test set and achieves a better RMSE while having a significantly smaller target dimension $s$ than all other methods. In terms of runtime, our full \emph{Adaptive} method is no worse than \emph{Nystrom} but is slower than our single round \emph{Row norm} method. Our \emph{Adaptive} method has a slightly better RMSE than the \emph{Oblivious} method and runs slower, but it achieves a significantly smaller target dimension $s$. However, our single round \emph{Row norm} variant is significantly faster than \emph{Oblivious}.
	
	\begin{table}[t]
		\caption{The RMSE on the test set along with the total training time of approximate KRR via various approximation methods.}
		\label{experiment-table}
		\vskip 0.1in
		\begin{center}
			\begin{small}
				\begin{sc}
					\scalebox{0.76}{
						\begin{tabular}{lcccr}
							\toprule
							Dataset:& Wine  & Insurance & CT location & Forest \\
							& $n=6,497$ & $n=9,822$ & $n=53,500$ & $n=581,012$ \\
							& $d=11$ & $d=85$ & $d=384$ & $d=54$ \\
							
							\midrule
							{\bf FF}   & {$0.736$, $2$ sec} & {$0.231$}, $1$ sec & {$3.89$}, $1$ min & {$1.00$}, $3$ min \\
							& $s=5000$ & $s=2000$ & $s=4000$ & $s=1000$ \\
							\midrule
							{\bf Nystrom}   & $0.730$, $1.5$ min & $0.231$, $1.5$ min & $3.86$, $8.5$ min & $1.03$, $8$ min \\
							& $s=2000$ & $s=2000$ & $s=1500$ & $s=500$ \\
							\midrule
							{\bf Oblivious}   & $0.732$, $13$ sec & $0.231$, $20$ sec & $3.70$, $3.5$ min & $1.05$, $2.5$ min \\
							& $s=1024$ & $s=1024$ & $s=5120$ & $s=320$ \\
							\midrule
							{\bf Row norm} & $0.727$, $3$ sec & $0.231$, $2$ sec & $3.68$, $1$ min & $1.08$, $2.5$ min \\
							& $s=5000$ & $s=1500$ & $s=6000$ & $s=1000$ \\
							\midrule
							{\bf Adaptive} & $0.723$, $15$ sec & $0.232$, $6$ sec & $3.72$, $8.5$ min & $1.05$, $7$ min \\
							& $s=400$ & $s=400$ & $s=2800$ & $s=500$ \\
							\bottomrule
						\end{tabular}
					}
				\end{sc}
			\end{small}
		\end{center}
		\vskip -0.3in
	\end{table}
	
	\section*{Acknowledgements}
	D.\@ P.\@ Woodruff was supported in part by Office of Naval Research (ONR) grant N00014-18-1-2562. 
	
	\bibliographystyle{icml2020}
	
	\bibliography{example_paper}

	\appendix
\section{Bernstein's Inequality}
We use Bernstein's concentration inequality given in the following lemma.
\begin{lemma}\label{claim:bernstein}
	Let $S_1,S_2, \cdots S_n$ be independent, mean-$0$, real-valued random variables, and assume that each one is uniformly bounded:
	\[\EE[S_k]=0 \text{ and } |S_k|\le L \text{ for each }k=1,2, \cdots n\]
	Let $Z =\sum_{k=1}^n S_k$ , and let $v$ denote the variance of the sum:
	\[ v= \EE[Z^2] = \sum_{k=1}^n \EE[S_k^2]. \]
	Then,
	\[ \Pr[|Z| \ge t] \le 2 e^{\frac{-t^2/2}{v+Lt/3}}. \]
\end{lemma}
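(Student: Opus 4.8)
The statement is the classical Bernstein concentration inequality for a sum of independent, mean-zero, uniformly bounded random variables, and the plan is to prove it via the standard Chernoff (exponential moment) method. First I would fix a parameter $\theta > 0$ and bound the moment generating function $\EE[e^{\theta Z}] = \prod_{k=1}^n \EE[e^{\theta S_k}]$, using independence. The key step is to control each factor $\EE[e^{\theta S_k}]$: expanding $e^{\theta S_k} = 1 + \theta S_k + \sum_{j\ge 2} \frac{(\theta S_k)^j}{j!}$, taking expectations kills the linear term since $\EE[S_k]=0$, and the tail is bounded using $|S_k|\le L$ by $\sum_{j\ge 2}\frac{\theta^j L^{j-2}}{j!}\EE[S_k^2] \le \frac{\theta^2 \EE[S_k^2]}{2}\sum_{j\ge 0}\frac{(\theta L)^j}{3^j \text{ -style bound}}$; more precisely one uses the standard estimate $\sum_{j\ge 2}\frac{x^{j-2}}{j!}\le \frac{1}{2}\cdot\frac{1}{1-x/3}$ for $0 < x < 3$, giving $\EE[e^{\theta S_k}] \le \exp\!\left(\frac{\theta^2 \EE[S_k^2]/2}{1-\theta L/3}\right)$ via $1+u\le e^u$.

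Multiplying over $k$ yields $\EE[e^{\theta Z}] \le \exp\!\left(\frac{\theta^2 v/2}{1-\theta L/3}\right)$ for all $0<\theta<3/L$, where $v=\sum_k \EE[S_k^2]$. Then by Markov's inequality, $\Pr[Z\ge t] \le e^{-\theta t}\,\EE[e^{\theta Z}] \le \exp\!\left(-\theta t + \frac{\theta^2 v/2}{1-\theta L/3}\right)$. The next step is to optimize the choice of $\theta$: selecting $\theta = \frac{t}{v + Lt/3}$ (which indeed satisfies $0<\theta<3/L$) and plugging in, a short calculation shows the exponent is at most $-\frac{t^2/2}{v+Lt/3}$, so $\Pr[Z\ge t]\le \exp\!\left(-\frac{t^2/2}{v+Lt/3}\right)$. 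Applying the same argument to $-Z$ (which also has mean zero, the same variance, and $|-S_k|\le L$) gives the identical bound for $\Pr[Z\le -t]$, and a union bound over the two tails produces the factor of $2$ in the stated inequality $\Pr[|Z|\ge t]\le 2e^{-(t^2/2)/(v+Lt/3)}$.

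The main obstacle — really the only nonroutine part — is establishing the per-summand MGF bound with the correct constants, i.e., showing $\EE[e^{\theta S_k}]\le \exp\!\big(\frac{\theta^2\EE[S_k^2]/2}{1-\theta L/3}\big)$; this requires the careful geometric-series estimate on the Taylor tail $\sum_{j\ge 2}\frac{(\theta L)^{j-2}}{j!}\le \frac{1}{2(1-\theta L/3)}$, which uses $j!\ge 2\cdot 3^{j-2}$ for $j\ge 2$. Everything downstream — the product over independent terms, Markov's inequality, the optimization of $\theta$, and symmetrization for the lower tail — is standard bookkeeping. Since this is a textbook inequality invoked as a black box in the runtime analysis of Lemma~\ref{lem:rotatedrowsampler-poly}, I would simply cite a standard reference (e.g., the concentration-inequalities literature) rather than reproduce the full derivation, but the sketch above is the route one would follow to prove it from scratch.
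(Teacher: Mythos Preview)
Your proposal is correct: the Chernoff/MGF route with the per-summand bound $\EE[e^{\theta S_k}]\le \exp\!\big(\tfrac{\theta^2\EE[S_k^2]/2}{1-\theta L/3}\big)$ via $j!\ge 2\cdot 3^{j-2}$, followed by optimizing $\theta=t/(v+Lt/3)$ and symmetrizing, is exactly the standard derivation and yields the stated bound. The paper itself does not give any argument at all---it simply cites \cite{boucheron2013concentration}---so your suggestion to cite a standard reference matches the paper's approach exactly, and your sketch already goes beyond what the paper provides.
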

See \cite{boucheron2013concentration} for a proof of this result.

\section{Properties of Leverage Scores}

In this section we present the definition and basic properties of the ridge leverage scores of a matrix $\Phi\in \RR^{D\times n}$. For every regularization parameter $\lambda>0$ and every $i \in [D]$ the ridge leverage score of the $i^{th}$ row of $\Phi$ is defined as,
\[l^{\lambda}_i \equiv  \phi_i^\top (\Phi^\top \Phi+\lambda I)^{-1}\phi_i,\]
where $\phi_i \in \RR^{n}$ is the $i^{th}$ row of $\Phi$, treated as a column vector.
There is a connection between the ridge leverage scores of $\Phi$ and the statistical dimension of $\Phi^\top \Phi$. The sum of the ridge leverage scores is equal to the statistical dimension of the kernel matrix $K=\Phi^\top \Phi$,
\[s_\lambda \equiv {\bf tr}\left(\Phi^\top \Phi(\Phi^\top \Phi+\lambda I)^{-1} \right)=\sum_{i \in [d]} l^\lambda_i.\]
We next present a lemma which shows that ridge leverage score sampling is an optimal sampling strategy for achieving the spectral guarantee of \eqref{spectral-bound} (up to an  $O(\log n)$ factor),
\begin{lemma}\label{lem:lss}
	Let $\Phi$ be a $D\times n$ matrix with rows $\phi_1,\phi_2, \cdots \phi_D$ and with ridge leverage scores $l_i^\lambda=\phi_i^\top (\Phi^\top \Phi + \lambda I)^{-1}\phi_i$ for all $i\in [D]$. Let $\epsilon,\lambda>0$. Assume that we are given a probability distribution $\{p_i\}_{i=1}^D$ such that $p_i \ge \alpha \cdot \frac{l^\lambda_i}{\sum_{j \in [D]}l_j^\lambda}$ for every $i\in[D]$ and some $\alpha\in(0,1)$. Construct the sampling matrix $\Pi\in\RR^{s \times D}$ by generating $s$ i.i.d. samples $j_1, j_2, \cdots j_s \in [D]$ with distribution $\{p_i\}_{i=1}^D$ and letting the $r^{th}$ row of $\Pi$ be $\frac{1}{\sqrt{sp_{j_r}}}{\bf e}_{j_r}^\top$ for every $r\in [s]$, where ${\bf e}_{1}, {\bf e}_{2}, \cdots {\bf e}_{D}\in \RR^D$ are the standard basis vectors. If the number of rows of $\Pi$ is at least $s\ge \frac{4\log_2n}{\alpha\epsilon^2}\cdot \sum_{j \in [D]}l_j^\lambda$, then with high probability,
	\[ \frac{\Phi^\top \Phi + \lambda I}{1+\epsilon} \preceq \Phi^\top \Pi^\top \Pi \Phi + \lambda I \preceq \frac{\Phi^\top \Phi + \lambda I}{1-\epsilon}.\]
\end{lemma}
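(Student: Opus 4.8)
The plan is to renormalize $\Phi^\top\Pi^\top\Pi\Phi+\lambda I$ on both sides by $(\Phi^\top\Phi+\lambda I)^{-1/2}$, so that it becomes a sum of i.i.d.\ positive semidefinite random matrices with expectation equal to the identity, and then apply a standard matrix Chernoff inequality. Write $M:=(\Phi^\top\Phi+\lambda I)^{-1/2}\in\RR^{n\times n}$, which is invertible with $M^{-1}=(\Phi^\top\Phi+\lambda I)^{1/2}$. Since $M(\Phi^\top\Phi+\lambda I)M=I$ and $1-\epsilon\ge\tfrac{1}{1+\epsilon}$, $1+\epsilon\le\tfrac{1}{1-\epsilon}$, the conclusion of the lemma follows once we establish
\[ (1-\epsilon)\,I\;\preceq\;M\big(\Phi^\top\Pi^\top\Pi\Phi+\lambda I\big)M\;\preceq\;(1+\epsilon)\,I \]
and conjugate both sides by $M^{-1}$. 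Recalling that the $r$-th row of $\Pi\Phi$ is $\tfrac{1}{\sqrt{sp_{j_r}}}\phi_{j_r}^\top$, we have $\Phi^\top\Pi^\top\Pi\Phi=\sum_{r=1}^{s}\tfrac{1}{sp_{j_r}}\phi_{j_r}\phi_{j_r}^\top$, hence
\[ M\big(\Phi^\top\Pi^\top\Pi\Phi+\lambda I\big)M\;=\;\sum_{r=1}^{s}Z_r,\qquad Z_r:=\frac{1}{sp_{j_r}}\,M\phi_{j_r}\phi_{j_r}^\top M+\frac{\lambda}{s}\,M^2 . \]
The deterministic shift $\tfrac{\lambda}{s}M^2$ is what keeps the accounting clean: each $Z_r\succeq0$, the $Z_r$ are i.i.d., and using $\sum_i p_i=1$ and $\sum_i\phi_i\phi_i^\top=\Phi^\top\Phi$ one gets $\EE[Z_r]=\tfrac1s\big(M\Phi^\top\Phi M+\lambda M^2\big)=\tfrac1s M(\Phi^\top\Phi+\lambda I)M=\tfrac1s I$, so $\EE\big[\sum_r Z_r\big]=I$.

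Next I would bound the per-summand operator norm, which is where the hypothesis on $\{p_i\}$ is used. The identity $\|M\phi_i\|_2^2=\phi_i^\top(\Phi^\top\Phi+\lambda I)^{-1}\phi_i=l_i^\lambda$ says $\|M\phi_i\|_2^2$ is exactly the $i$-th ridge leverage score, so
\[ \lambda_{\max}(Z_r)\;\le\;\frac{\|M\phi_{j_r}\|_2^2}{sp_{j_r}}+\frac{\lambda}{s}\,\|M^2\|_{op}\;=\;\frac{l_{j_r}^\lambda}{sp_{j_r}}+\frac{\lambda}{s\big(\lambda_{\min}(\Phi^\top\Phi)+\lambda\big)}\;\le\;\frac{1}{\alpha s}\sum_j l_j^\lambda+\frac1s\;=:R, \]
where the two terms are bounded using $p_i\ge\alpha\,l_i^\lambda/\sum_j l_j^\lambda$ and $\|M^2\|_{op}\le1/\lambda$ respectively; note $R=O\big(s_\lambda/(\alpha s)\big)$ since $\sum_j l_j^\lambda=s_\lambda$. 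Now apply a standard matrix Chernoff bound to the independent PSD matrices $Z_1,\dots,Z_s$ acting on $\RR^n$: since $\EE\sum_r Z_r=I$ (so all eigenvalues of the mean are $1$) and $\lambda_{\max}(Z_r)\le R$, taking the relative-error parameter $\delta=\epsilon\in(0,1)$ gives
\[ \Pr\!\big[\lambda_{\max}(\textstyle\sum_r Z_r)\ge1+\epsilon\big]+\Pr\!\big[\lambda_{\min}(\textstyle\sum_r Z_r)\le1-\epsilon\big]\;\le\;2n\cdot e^{-\Omega(\epsilon^2/R)} . \]
With $s\ge\tfrac{4\log_2 n}{\alpha\epsilon^2}\sum_j l_j^\lambda$ we have $\epsilon^2/R=\Omega(\log n)$, so this probability is $\tfrac{1}{\text{poly}(n)}$; on the complementary event $(1-\epsilon)I\preceq\sum_r Z_r\preceq(1+\epsilon)I$, and conjugating by $M^{-1}=(\Phi^\top\Phi+\lambda I)^{1/2}$ yields exactly the claimed spectral inequality.

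I do not expect a genuine obstacle: this is a textbook matrix-concentration argument, and once the problem is normalized by $M$ everything is mechanical. The two points that need a little care are (i) folding the regularizer into the random sum through the deterministic term $\tfrac{\lambda}{s}M^2$, so that the sum has expectation \emph{exactly} $I$ and a relative-error matrix Chernoff bound applies directly, sidestepping the fact that $\Phi^\top\Phi$ may be rank-deficient (in which case $M\Phi^\top\Phi M$ alone has a zero eigenvalue and admits no multiplicative lower bound); and (ii) tracking the constants to obtain precisely the stated threshold $s\ge\tfrac{4\log_2 n}{\alpha\epsilon^2}\sum_j l_j^\lambda$, which just requires the sharp form of the Chernoff tail at $\delta=\epsilon$. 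One could equivalently run matrix Bernstein on the unregularized terms $\tfrac{1}{sp_{j_r}}M\phi_{j_r}\phi_{j_r}^\top M$ (whose matrix variance is bounded by $\tfrac{1}{\alpha s}\sum_j l_j^\lambda$ via $\EE[Y_r^2]\preceq\tfrac{1}{\alpha s}\big(\sum_j l_j^\lambda\big)\EE[Y_r]$ and $M\Phi^\top\Phi M\preceq I$) and add $\lambda M^2$ to both sides at the end.
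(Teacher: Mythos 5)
The paper itself does not prove this lemma: Appendix B simply declares the guarantee well known and points to \cite{cohen2016online,cohen2017input}. Your proposal fills that in with the standard self-contained argument — conjugate by $M=(\Phi^\top\Phi+\lambda I)^{-1/2}$, fold the regularizer into the i.i.d.\ PSD summands $Z_r=\frac{1}{sp_{j_r}}M\phi_{j_r}\phi_{j_r}^\top M+\frac{\lambda}{s}M^2$ so that $\EE\sum_r Z_r=I$ exactly, bound $\lambda_{\max}(Z_r)\le \frac{1}{\alpha s}\sum_j l_j^\lambda+\frac1s$ using $\|M\phi_i\|_2^2=l_i^\lambda$, the hypothesis on $p_i$, and $\lambda\|M^2\|_{op}\le 1$, then invoke matrix Chernoff and undo the conjugation. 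This is precisely the argument behind the cited results, and in substance it is correct; it also neatly handles the rank-deficient case, as you note.

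One step as written is wrong, however: you assert $1-\epsilon\ge\frac{1}{1+\epsilon}$ and use it to pass from $\sum_r Z_r\succeq(1-\epsilon)I$ to the lemma's lower bound. In fact $(1-\epsilon)(1+\epsilon)=1-\epsilon^2<1$, so $1-\epsilon<\frac{1}{1+\epsilon}$, and $\sum_r Z_r\succeq(1-\epsilon)I$ is \emph{weaker} than the required $\sum_r Z_r\succeq\frac{1}{1+\epsilon}I$ (the upper direction is fine, since $1+\epsilon\le\frac{1}{1-\epsilon}$). The repair is routine: run the lower Chernoff tail at deviation $\delta=\frac{\epsilon}{1+\epsilon}\ge\epsilon/2$, which yields exactly $\sum_r Z_r\succeq\frac{1}{1+\epsilon}I$ and only changes the constant in the exponent — consistent with your own caveat that matching the stated constant $4$ requires the sharp form of the tail. (If you also want the statement in the degenerate regime where $\frac{1}{\alpha}\sum_j l_j^\lambda\lesssim\epsilon$, note the conclusion then holds deterministically, since the sampled part is always $\preceq\frac{1}{\alpha}\bigl(\sum_j l_j^\lambda\bigr)I$ and $\lambda M^2\succeq\bigl(1-\sum_j l_j^\lambda\bigr)I$, so no concentration is needed there.) With these adjustments your argument is complete.
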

\begin{proof}
	This guarantee for leverage score sampling is well-known. See, for example, \cite{cohen2016online,cohen2017input}.
\end{proof}

\section{Proof of Lemma \ref{resursive-rlss-lem}}
Let $S_t$ be the sampling matrix and let $\lambda_t$ be the regularizing parameter in the $t^{th}$ iteration of Algorithm \ref{alg:outerloop}.
The proof of the lemma proceeds by induction. We define the event $\mathcal{E}_t$ as the set of all sampling matrices $S_t$ that satisfy the following condition,
\[	\frac{\Phi^\top \Phi + \lambda_t I}{1+\epsilon} \preceq \Phi^\top S_t^\top S_t \Phi + \lambda_t I \preceq \frac{\Phi^\top \Phi + \lambda_t I}{1-\epsilon}.\]
We show by induction that for all $t=0, 1, \cdots T$, the invariant $\mathcal{E}_t$ conditionally holds with high probability, that is,
\[\Pr[\mathcal{E}_t | \mathcal{E}_{t-1}] \ge 1-\frac{1}{\text{poly}(n)}.\]
The base of the induction corresponds to $t=0$.
For $t=0$ we have that $S_0\Phi = 0$ and $\lambda_0 = \frac{\|\Phi\|_F^2}{\epsilon}$, and therefore, $\Phi^\top \Phi \preceq \epsilon{\lambda_0} I$, which implies that,
\[\frac{\Phi^\top \Phi + \lambda_0 I}{1+\epsilon} \preceq \Phi^\top S_0^\top S_0\Phi + \lambda_0 I \preceq \Phi^\top \Phi + \lambda_0 I.\]
Therefore, $\Pr[\mathcal{E}_0] =1$, which proves the {\bf base case of the induction}.

Now to prove the inductive step, note that conditioned on the event $\mathcal{E}_{t}$ holding for some $t \ge0$, we find that
$$\frac{\Phi^\top \Phi + \lambda_{t} I}{1+\epsilon} \preceq \Phi^\top S_{t}^\top S_{t} \Phi + \lambda_{t} I \preceq \frac{\Phi^\top \Phi + \lambda_{t} I}{1-\epsilon}.$$
By definition of ridge leverage scores, $l_i^{\lambda_{t+1}} = \phi_i^\top (\Phi^\top \Phi + \lambda_{t+1} I)^{-1} \phi_i$, and noting that $\lambda_{t+1} = \lambda_{t}/2$, we have
$$\phi_i^\top (\Phi^\top \Phi + \lambda_{t} I)^{-1} \phi_i \le l_i^{\lambda_{t+1}} \le 2 \phi_i^\top (\Phi^\top \Phi + \lambda_{t} I)^{-1} \phi_i.$$
By the inductive hypothesis, for $\epsilon\le \frac{1}{3}$, we have,
\[	\frac{1}{3} \cdot l_i^{\lambda_{t+1}} \le \phi_i^\top \left(\Phi^\top S_{t}^\top S_{t} \Phi + \lambda_{t} I\right)^{-1} \phi_i \le \frac{4}{3} \cdot l_i^{\lambda_{t+1}} .\]
Now note that by the assumption of the lemma, $S_{t+1}$ is a rank-$s$ row norm sampler for the matrix $\Phi(\Phi^\top S_{t}^\top S_{t} \Phi + \lambda_{t} I)^{-1/2}$. Therefore there exists a probability distribution $\{p_i\}_{i=1}^D$ such that $S_{t+1}\in \RR^{s'\times D}$ is the corresponding sampling matrix to this probability distribution constructed as in Definition \ref{def:row-samp}. This probability distribution satisfies, 
\begin{align*}
p_i &\ge \frac{1}{4}\frac{\phi_i^\top \left(\Phi^\top S_{t}^\top S_{t} \Phi + \lambda_{t} I\right)^{-1} \phi_i}{\sum_{j\in[D]} \phi_j^\top \left(\Phi^\top S_{t}^\top S_{t} \Phi + \lambda_{t} I\right)^{-1} \phi_j}\\
&\ge \frac{1}{16}\cdot \frac{l_i^{\lambda_{t+1}}}{\sum_{j\in[D]} l_j^{\lambda_{t+1}}}.
\end{align*}
Therefore because $s=C\frac{\mu}{\epsilon^2} \log_2n \ge C \frac{s_{\lambda}}{\epsilon^2} \log_2n \ge C \frac{s_{\lambda_{t+1}}}{\epsilon^2} \log_2n$, if $C$ is a large enough constant, by Lemma~\ref{lem:lss}, 
\[\Pr[\mathcal{E}_{t+1} | \mathcal{E}_{t}] \ge 1-\frac{1}{\text{poly}(n)}.\]
This completes the {\bf inductive step}.
By union bounding over all $t$, we get that,
$$\Pr[\mathcal{E}_T] \ge 1 - \frac{1}{\text{poly}(n)}.$$ 
Hence, since $\lambda_T \le \lambda$, with high probability the following holds for the sampling matrix $\Pi = S_T$,
\[ \frac{\Phi^\top \Phi + \lambda I}{1+\epsilon} \preceq \Phi^\top \Pi^\top \Pi \Phi + \lambda I \preceq \frac{\Phi^\top \Phi + \lambda I}{1-\epsilon}.\]
This completes the proof of the lemma.

\section{Proof of Lemma \ref{soda-result}}
By invoking Theorem 1.2 of \cite{ahle2019oblivious}, there exists a sketch $S^q \in\RR^{s \times d^q}$ such that if $s=\Omega\left(\frac{q}{\epsilon^2} \log^3 \frac{1}{\delta}\right)$ then for any $y\in \RR^{d^q}$,
\[ \Pr\left[ \|S^q y \|_2^2 \in (1\pm \epsilon) \|y\|_2^2\right] \ge 1- \delta/2. \]
Let $G \in \RR^{m \times s}$ be a random matrix with i.i.d. normal entries. Thus, $G$ is a JL transform with high probability. By the analysis in \cite{kane2014sparser}, if $m = \Omega \left( \frac{1}{\epsilon^2} \log \frac{1}{\delta} \right)$ then for any $z \in \RR^{s}$,
\[ \Pr\left[ \|G z \|_2^2 \in (1\pm \epsilon) \|z\|_2^2\right] \ge 1- \delta/2. \]
Therefore if we let $Q^q := G S^q$ then we have that this matrix is of size $m \times d^q$ and also by a union bound, 
for any $y\in \RR^{d^q}$,
\[ \Pr\left[ \|Q^q y \|_2^2 \in (1\pm \epsilon) \|y\|_2^2\right] \ge 1- \delta. \]

\paragraph{Runtime:} As shown in \cite{ahle2019oblivious}, the sketch $S^q$ can be applied to $v_1 \otimes v_2 \otimes \cdots v_q$ by a recursive application of $O(q)$ independent instances of OSNAP and SRHT sketches on the vectors $v_i$ and their sketched versions. The sketch $S^q$ in \cite{ahle2019oblivious} can be represented by a binary tree with $q$ leaves where the leaves are OSNAP sketches and the internal nodes are SRHT sketches. Therefore, by Theorem 1.2 of \cite{ahle2019oblivious}, $S^q x^{\otimes q}$ can be computed in time $O\left( qs \log s + \frac{q^{3/2}}{\epsilon} \log \frac{1}{\delta} \cdot \text{nnz}(x) \right)$. From the binary tree structure of the sketch it follows that after computing $S^q x^{\otimes q}$, $S^q \left(x^{\otimes q-1} \otimes {\bf e}_1\right)$ can be computed by updating the path from one of the leaves to the root of the binary tree which amounts to applying one OSNAP transform on ${\bf e}_1$ and applying $O(\log q)$ instances of SRHT on the intermediate vectors which can be computed in a total extra time of $O( s\log s \log q )$. By this argument, it follows that $S^q \left( x^{\otimes q-j} \otimes {\bf e}_1^{j} \right)$ can be computed for all $j=0,1,2, \cdots q$ in total time $O\left( q s \log q \log s + \frac{q^{3/2}}{\epsilon} \log\frac{1}{\delta} \cdot \text{nnz}(x) \right)$. By choosing a large enough $s = O\left( \frac{q}{\epsilon^2} \log^3 \frac{1}{\delta} \right)$, this runtime will be $O\left( \frac{q^2 \log^2 \frac{q}{\epsilon}}{\epsilon^2} \log^3 \frac{1}{\delta} + \frac{q^{3/2}}{\epsilon} \log\frac{1}{\delta} \cdot \text{nnz}(x) \right)$.
Also, the time to apply the Gaussian sketch $G$, with large enough target dimension $m = O\left( \frac{1}{\epsilon^2} \log \frac{1}{\delta} \right)$,  to any $s$-dimensional vector is $O\left( \frac{s}{\epsilon^2} \log \frac{1}{\delta} \right)$. Hence the total time to compute $Q^q \left( x^{\otimes q-j} \otimes {\bf e}_1^{j} \right)$ for all $j=0,1,2, \cdots q$ is $O\left( \frac{q^2 }{\epsilon^4} \log^4 \frac{1}{\delta} + \frac{q^{3/2}}{\epsilon} \log\frac{1}{\delta} \cdot \text{nnz}(x) \right)$. 

\section{Proof of Theorem \ref{thm:poly}}
We run the recursive leverage score sampling procedure of Algorithm~\ref{alg:outerloop} on the feature matrix $\Phi = X^{\otimes q}$ with $\mu=O(s_\lambda)$. Each time Algorithm \ref{alg:outerloop} invokes the procedure \textsc{RowSampler}, we run Algorithm~\ref{alg:rotatedrowsampler-poly}. By Lemma \ref{lem:rotatedrowsampler-poly}, for any $\lambda'>0$, any integers $q,s'$, and any matrices $X, B$, with high probability, the procedure \textsc{RowSampler}$(X,q,B,\lambda',s')$ of Algorithm \ref{alg:rotatedrowsampler-poly} outputs a rank-$s$ row norm sampler for matrix $X^{\otimes q}(B^\top B + \lambda' I)^{-1/2} = \Phi (B^\top B + \lambda' I)^{-1/2}$. Therefore, since the total number of times Algorithm \ref{alg:rotatedrowsampler-poly} is invoked by Algorithm \ref{alg:outerloop} is $O\left(\log \frac{ \| X^{\otimes q} \|_F^2 }{\epsilon \lambda} \right) = O\left(\log \frac{{\bf tr}(K)}{\epsilon \lambda} \right)=O(\log n)$, by a union bound, with high probability the preconditions of Lemma \ref{resursive-rlss-lem} hold and hence we can invoke this lemma to conclude that the sampler $\Pi$ that Algorithm \ref{alg:outerloop} outputs satisfies the following with high probability,
\[ \frac{\Phi^\top \Phi + \lambda I}{1+\epsilon} \preceq \Phi^\top \Pi^\top \Pi \Phi + \lambda I \preceq \frac{\Phi^\top \Phi + \lambda I}{1-\epsilon}.\] 
Therefore, if we let $Z=\Pi \Phi$, the theorem follows because $\Pi$ has $s=O\left( \frac{s_\lambda}{\epsilon^2} \log n\right)$ rows. Moreover, the primitive \textsc{RowSampler}$(X,q,B,\lambda',s')$ of Algorithm \ref{alg:rotatedrowsampler-poly} is invoked $O\left(\log \frac{ \| X^{\otimes q} \|_F^2 }{\epsilon \lambda} \right) = O\left(\log \frac{{\bf tr}(K)}{\epsilon \lambda} \right) = O(\log n)$ times with inputs $s' = O(\frac{s_\lambda}{\epsilon^2} \log n)$ and a matrix $B$ with $O\left(\frac{s_\lambda}{\epsilon^2} \log n\right)$ rows. Each invocation, by Lemma \ref{lem:rotatedrowsampler-poly}, takes $O\left( {\text{poly}(\epsilon^{-1},q,\log n) \cdot s_\lambda^2n} + q^{5/2} \log^3n \cdot \text{nnz}(X) \right)$ operations. Hence the total runtime of the algorithm is $O\left( {\text{poly}(\epsilon^{-1},q,\log n) \cdot s_\lambda^2n} + q^{5/2} \log^4n \cdot \text{nnz}(X) \right)$.

\section{\textsc{RowSampler} for the  Gaussian Kernel}
We design a procedure \textsc{RowSampler} that takes in the dataset matrix $X \in\RR^{d\times n}$ together with an $m\times n$ matrix $B$ and performs row norm sampling (see Definition \ref{def:row-samp}) on matrix $\phi_q(X) (B^\top B + \lambda I)^{-1/2}$, where $\phi_q(X)$ is a matrix with $n$ columns which are obtained by applying the mapping $\phi_q$ of Definition \ref{def:poly-feature-Gauss} on each of the columns of $X$, i.e., $[\phi_q(X)]_{\star,i} = \phi_q(X_{\star,i})$. Algorithm \ref{alg:rowsampler-gauss} performs this task.
\begin{algorithm}[h!]
	\caption{\algoname{RowSampler for Gaussian Kernel}}
	{\bf input}: $X \in \RR^{d \times n}$, $q\in \mathbb{Z}$, $B \in \RR^{m\times n}$, $\lambda\in \RR$, $s\in \mathbb{Z}$\\
	{\bf output}: Sampling matrix $S \in \RR^{s \times D}$ 
	\begin{algorithmic}[1]
		
		\STATE{Generate $H\in\RR^{n\times d'}$ with i.i.d. normal entries, where $d'= C_1 q \log_2n$}
		\STATE{$M \gets \text{diag}\big( \{e^{-\|X_{\star,i}\|_2^2/2} \}_{i=1}^n \big) \cdot (B^\top B + \lambda I)^{-1/2} \cdot H$}\label{M-alg-gauss}
		\STATE{Let $Q^q\in\RR^{m'\times d^q}$ be an instance of the sketch from Lemma \ref{soda-result} with $\epsilon=\frac{1}{10q}$, $\delta=\frac{1}{\text{poly}(n)}$, $m'=C_2q^2 \log_2n$}
		\STATE{Compute $P_j=Q^q \left(X^{\otimes (q-j)} \otimes E_1^{\otimes j}\right)$ for all $j=0,1, \cdots q$, where $E_1=[{\bf e}_1, {\bf e}_1, \cdots {\bf e}_1] \in \RR^{d\times n}$}\label{Pj-alg-gauss}
		
		\STATE{$Z\gets \left( \frac{P_q}{\sqrt{0!}} \oplus \frac{P_{q-1}}{\sqrt{1!}} \oplus \frac{P_{q-2}}{\sqrt{2!}} \oplus \cdots \frac{P_0}{\sqrt{q!}} \right) \cdot M$}\label{Z-gauss}
		\STATE{$p_i\gets {\|Z_{\star,i}\|_2^2}/{\|Z\|_F^2}$ for every $i\in[d']$}\label{J-dist-alg-gauss}
		
		\STATE{Generate i.i.d. samples $j_1,j_2,\cdots j_s$ from  dist.  $\{p_i\}_{i=1}^{d'}$}
		
		\STATE{$h:[d]\rightarrow[s']$: fully independent hash with $s'=\lceil {q}^{\frac{3}{2}}s \rceil$}
		\STATE{Let $h^{-1}(r)=\{ j\in[d]:h(j)=r \}$ for every $r\in[s']$}
		
		\STATE{For all $r\in[s']$, generate $G_r \in \RR^{n'\times d_r}$ with i.i.d. normal entries, where $d_r=|h^{-1}(r)|$, $n'=C_3q^2\log_2n$}
		\STATE{$W_r \gets G_r \cdot X_{h^{-1}(r),\star}$ for every $r\in[s']$}\label{W-gauss}
		
		\FOR{$l=1$ to $s$}
		\STATE{$y_a \gets \frac{\| P_{q-a} \cdot M_{\star,j_l} \|_F^2/a!}{\sum_{b=0}^q \| P_{q-b} \cdot M_{\star,j_l} \|_F^2/b!}$ for every $a=0,1 \dots q$}\label{y-gauss}
		\STATE{Generate a sample $w$ from distribution $\{ y_a \}_{a=0}^q$}\label{w-gauss}

		\STATE{$D^{(0)} \gets \text{diag}(M_{\star,j_l})$}
		\FOR{$a=1$ to $w$}
		\STATE{$p^{a}_r\gets \frac{\left\|W_r \cdot D^{(a-1)} \cdot P_{a+q-w}^\top \right\|_F^2}{\sum_{t=1}^{s'} \left\|W_t \cdot D^{(a-1)} \cdot P_{a+q-w}^\top \right\|_F^2}$ for all $r \in [s']$}\label{dist-par-gauss}
		\STATE{Generate a sample $t$ from  distribution $\{p^a_r\}_{r=1}^{s'}$}
		\STATE{$q^a_i \gets \frac{\left\|X_{i,\star}  D^{(a-1)}  P_{a+q-w}^\top \right\|_2^2}{\left\|X_{h^{-1}(t),\star}  D^{(a-1)}  P_{a+q-w}^\top \right\|_F^2}$ for $i\in h^{-1}(t)$}\label{dist-qai-gauss}
		
		\STATE{Generate a sample $i_a$ from dist. $\{q^a_i \}_{i\in h^{-1}(t)}$}\label{isample-alg-gauss}
		
		\STATE{$D^{(a)}\gets D^{(a-1)}\cdot \text{diag}(X_{i_a,\star})$}
		\ENDFOR
		\STATE{$\beta \gets 0$}
		\FOR{$j=1$ to $d'$}\label{beta-forloop-gauss}
		\STATE{$y_w^* \gets \frac{\| P_{q-w} \cdot M_{\star,j} \|_F^2/w!}{\sum_{b=0}^q \| P_{q-b} \cdot M_{\star,j} \|_F^2/b!}$}\label{ystar-gauss}
		\STATE{$L^{(0)} \gets \text{diag}(M_{\star,j})$}
		\FOR{$b=1$ to $w$}
		\STATE{$p^*_b \gets \frac{\left\|W_{h(i_b)} \cdot L^{(b-1)} \cdot P_{b+q-w}^\top \right\|_F^2}{\sum_{t=1}^{s'} \left\|W_t \cdot L^{(b-1)} \cdot P_{b+q-w}^\top \right\|_F^2}$}\label{p-star-gauss}
		\STATE{$q^*_b \gets \frac{\left\|X_{i_b,\star}  L^{(b-1)}  P_{b+q-w}^\top \right\|_2^2}{\left\|X_{h^{-1}(h(i_b)),\star}  L^{(b-1)}  P_{b+q-w}^\top \right\|_F^2}$}\label{q-star-gauss}
		\STATE{$L^{(b)}\gets L^{(b-1)}\cdot \text{diag}(X_{i_b,\star})$}
		
		\ENDFOR
		\STATE{$\beta \gets \beta + sp_j y^*_w \cdot \prod_{b=1}^q (p^*_b q^*_b) $}
		\ENDFOR
		
		\STATE{$S_{l,\frac{d^w-1}{d-1}:\frac{d^{q+1}-d^{w+1}}{d-1}} \gets \frac{1}{\sqrt{\beta}} \left( {\bf e}_{i_1} \otimes {\bf e}_{i_2} \otimes  \cdots {\bf e}_{i_a} \right)^\top$ }\label{lthrow-gauss}
		\ENDFOR
		\STATE{\textbf{return} $S$}
	\end{algorithmic}
	\label{alg:rowsampler-gauss}
\end{algorithm}

\begin{lemma}\label{lem:rowsampler-gauss}
	For any matrix $X\in \RR^{d\times n}$, let $A=\phi_q(X)$ be a matrix with $n$ columns whose columns are obtained by applying the mapping $\phi_q$ as in Definition \ref{def:poly-feature-Gauss} to each column of $X$, i.e., $A_{\star,i}=\phi_q(X_{\star,i})$. For any matrix $B\in\RR^{m\times n}$, any $\lambda>0$, and any integers $s$ and $q$, Algorithm \ref{alg:rowsampler-gauss} outputs a rank-$s$ row norm sampler for matrix $A(B^\top B + \lambda I)^{-1/2}$ using $O\left( qm^2 n \log n + q^{15/2} s^2 n \log^3 n + q^{5/2} \log^3n \cdot \text{nnz}(X) \right)$ runtime.
\end{lemma}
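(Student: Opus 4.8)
The plan is to mirror the proof of Lemma~\ref{lem:rotatedrowsampler-poly} almost line for line, the only genuinely new ingredient being the stage that selects which of the $q+1$ tensor-power blocks of $\phi_q$ a sampled row belongs to. First I would record that, by Definition~\ref{def:poly-feature-Gauss}, $A$ factors as $A = \left(\bigoplus_{a=0}^q \frac{X^{\otimes a}}{\sqrt{a!}}\right)\cdot\text{diag}\!\left(\{e^{-\|X_{\star,i}\|_2^2/2}\}_{i=1}^n\right)$, so that $A(B^\top B + \lambda I)^{-1/2}H = \left(\bigoplus_{a=0}^q \frac{X^{\otimes a}}{\sqrt{a!}}\right)M$ for the matrix $M$ computed in line~\ref{M-alg-gauss}. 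Since $H$ has $d'=C_1 q\log_2 n$ i.i.d.\ Gaussian columns it is a JL transform, so a union bound over the $D=\sum_{j=0}^q d^j$ rows shows that with high probability every row norm of $A(B^\top B+\lambda I)^{-1/2}$ is preserved up to a $(1\pm 0.1)$ factor in $\left(\bigoplus_a \frac{X^{\otimes a}}{\sqrt{a!}}\right)M$; hence it suffices to build a row norm sampler for this latter matrix.

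Next I would index a row of $\left(\bigoplus_a \frac{X^{\otimes a}}{\sqrt{a!}}\right)M$ by a pair $(w;i_1,\dots,i_w)$ with $w\in\{0,\dots,q\}$ and $(i_1,\dots,i_w)\in[d]^w$, and verify that the algorithm draws such a pair in three stages: (i) sample a column $j\in[d']$ with probability $p_j=\|Z_{\star,j}\|_2^2/\|Z\|_F^2$ (lines~\ref{Z-gauss}--\ref{J-dist-alg-gauss}); (ii) conditioned on $j$, sample a block $w$ with probability $y_w$ (lines~\ref{y-gauss}--\ref{w-gauss}); (iii) conditioned on $(j,w)$, run the degree-$w$ recursive reshape-and-sample procedure of Algorithm~\ref{alg:rotatedrowsampler-poly} on the vector $X^{\otimes w}M_{\star,j}$, using the sketch $P_{a+q-w}=Q^q\!\left(X^{\otimes(w-a)}\otimes E_1^{\otimes(q-w+a)}\right)$ at the $a$-th reshaping step. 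The one point requiring new checking is that $P_{a+q-w}$ still does its job: by the tensor identity $P_{a+q-w}D^{(a-1)}X_{i,\star}^\top = Q^q\!\left(\left(X^{\otimes(w-a)}D^{(a-1)}X_{i,\star}^\top\right)\otimes{\bf e}_1^{\otimes(q-w+a)}\right)$ and Lemma~\ref{soda-result} with $\epsilon=\tfrac{1}{10q}$, the quantity $\|X_{i,\star}D^{(a-1)}P_{a+q-w}^\top\|_2^2$ is within $(1\pm 0.1/q)$ of $\|X^{\otimes(w-a)}D^{(a-1)}X_{i,\star}^\top\|_2^2$, and the Gaussian matrices $G_r$ of line~\ref{W-gauss} preserve the bucketed row norms via \eqref{jl-poly}. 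Repeating verbatim the chain of inequalities in the proof of Lemma~\ref{lem:rotatedrowsampler-poly} then gives, conditioned on $(j,w)$, that $(i_1,\dots,i_w)$ is sampled with probability at least $\tfrac{1}{2}\,|[X^{\otimes w}M_{\star,j}]_{(i_1,\dots,i_w)}|^2/\|X^{\otimes w}M_{\star,j}\|_2^2$; and since $\|P_{q-b}M_{\star,j}\|_2^2\in(1\pm 0.1/q)\|X^{\otimes b}M_{\star,j}\|_2^2$ for every $b$ by Lemma~\ref{soda-result}, comparing the estimate for block $w$ against the sum over all $q+1$ blocks yields $y_w\ge(1-0.2)\,\frac{\|X^{\otimes w}M_{\star,j}\|_2^2/w!}{\sum_{b=0}^q\|X^{\otimes b}M_{\star,j}\|_2^2/b!}$.

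Multiplying the three stages, and using that the for loop in lines~\ref{beta-forloop-gauss}--\ref{q-star-gauss} computes $\beta/s = \sum_{j=1}^{d'} p_j\,y^*_w\prod_{b=1}^{w}(p^*_b q^*_b)$, which by the law of total probability equals $\Pr[\text{row }(w;i_1,\dots,i_w)\text{ is sampled}]$, I would conclude this probability is at least a constant times $\sum_j p_j\,|[(\bigoplus_a X^{\otimes a}/\sqrt{a!})M]_{(w;i_1,\dots,i_w),j}|^2/\|(\bigoplus_a X^{\otimes a}/\sqrt{a!})M_{\star,j}\|_2^2$; since $Q^q$ preserves the column norms of $(\bigoplus_a X^{\otimes a}/\sqrt{a!})M$ up to a constant, $p_j$ is proportional to $\|(\bigoplus_a X^{\otimes a}/\sqrt{a!})M_{\star,j}\|_2^2$ up to a constant, so the sum telescopes to a constant times $\|[(\bigoplus_a X^{\otimes a}/\sqrt{a!})M]_{(w;i_1,\dots,i_w),\star}\|_2^2/\|(\bigoplus_a X^{\otimes a}/\sqrt{a!})M\|_F^2$. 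Finally the JL guarantee for $H$ turns this into $\tfrac{1}{4}\,\|[A(B^\top B+\lambda I)^{-1/2}]_{(w;i_1,\dots,i_w),\star}\|_2^2/\|A(B^\top B+\lambda I)^{-1/2}\|_F^2$, which together with the observation that line~\ref{lthrow-gauss} writes $\beta^{-1/2}({\bf e}_{i_1}\otimes\cdots\otimes{\bf e}_{i_w})^\top$ into exactly the coordinates of $\RR^D$ corresponding to block $w$ of $A$, shows $S$ is a rank-$s$ row norm sampler for $A(B^\top B+\lambda I)^{-1/2}$ in the sense of Definition~\ref{def:row-samp}. For the runtime I would re-run the accounting of Lemma~\ref{lem:rotatedrowsampler-poly}: line~\ref{M-alg-gauss} costs $O(qm^2 n\log n)$ (the extra factor over the polynomial case coming from the diagonal rescaling and from forming $Z$ out of $q+1$ blocks), line~\ref{Pj-alg-gauss} costs $O(\text{poly}(q,\log n)\,n + q^{5/2}\log n\cdot\text{nnz}(X))$ by Lemma~\ref{soda-result}, the $W_r$ cost $O(q^2\log n\cdot\text{nnz}(X))$, and the distributions in lines~\ref{dist-par-gauss},~\ref{dist-qai-gauss},~\ref{p-star-gauss},~\ref{q-star-gauss} are evaluated with the same hashing plus Bernstein's inequality argument, contributing the dominant $O(q^{15/2}s^2 n\log^3 n + q^{5/2}\log^3 n\cdot\text{nnz}(X))$.

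The hard part will be making sure the per-block multiplicative errors of size $1\pm 0.1/q$ from Lemma~\ref{soda-result} aggregate, across all $q+1$ blocks appearing in the denominator of $y_w$, into only a constant-factor loss rather than an $e^{\Theta(1)}$ blow-up, together with keeping the index bookkeeping in line~\ref{lthrow-gauss} consistent with the block structure of $\phi_q$; everything else is a direct transcription of the polynomial-kernel argument.
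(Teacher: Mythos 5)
Your proposal follows the paper's own proof essentially step for step: the same three-stage decomposition (column $j$ via the $Q^q$-sketched matrix $Z$, block $w$ via the per-block norms $\|P_{q-b}M_{\star,j}\|_2^2/b!$, then the degree-$w$ reshape-and-sample recursion with $P_{a+q-w}$ and the bucketed Gaussian sketches $W_r$), the same identification of $\beta/s$ with the sampling probability, the same final JL step for $H$, and the same runtime accounting. The aggregation issue you flag as the hard part is in fact benign and handled exactly as you suggest: each of the $q+1$ terms in the denominator of the block-selection probability carries its own $1\pm 0.1/q$ relative error, so the ratio is distorted by only a single $1\pm O(1/q)$ factor (not a product over blocks), and the genuinely compounding errors are the $w\le q$ per-step losses inside the recursion, which your verbatim transcription of the polynomial-kernel chain already controls to a constant.
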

\begin{proof}
	Let $A=\phi_q(X)$ be the matrix whose columns are $A_{\star,i} = \phi_q(X_{\star,i})$ for all $i$, where $\phi_q$ is the degree-$q$ polynomial lifting for Gaussian kernel as in Definition \ref{def:poly-feature-Gauss}.
	Algorithm \ref{alg:rowsampler-gauss} outputs a random sampling matrix $S\in\RR^{s\times D}$ with $D=\sum_{j=0}^{q} d^j$. First, we show that all rows of $S$ have independent and identical distributions. The reason is because the algorithm generates i.i.d. samples $j_1,j_2, \cdots j_s$ in line~7 and then for every $l\in[s]$, the $l^{th}$ row of the matrix $S$ is constructed by sampling $w$ in line~14 and then $i_1,i_2,\cdots i_w$ in line~20 from distributions that are solely determined by $j_l$ only, and is independent of the values of $j_{l'}$ for $l'\neq l$.
	Let $\widetilde{A} = A (B^\top B + \lambda I)^{-1/2}$. Now let us partition the matrix $\widetilde{A}$ as,
	\[\widetilde{A}=\begin{bmatrix}
	\widetilde{A}_0\\
	\widetilde{A}_1\\
	\vdots\\
	\widetilde{A}_q
	\end{bmatrix},\]
	where $\widetilde{A}_j$ is a $d^j\times n$ matrix for every $j=0,1,\cdots q$. Considering the action of the sampling matrix on matrix $\widetilde{A}$ will ease notation, so we consider the matrix $S\widetilde{A}$. Since every row of $S\widetilde{A}$ is identically distributed, let us consider the distribution of the $l^{th}$ row of $S\widetilde{A}$ for an arbitrary $l\in[s]$.
	
	Let $J$ be a random variable that takes values in $\{1,2, \cdots d'\}$ with probability distribution $\{p_i\}_{i=1}^{d'}$, which is defined in line~6 of Algorithm \ref{alg:rowsampler-gauss}. A random index $j_l$ generated in line~7 of the algorithm is a copy of the random variable $J$.
	For any $j\in[d']$, let $T^j$ be a random variable that takes values in $\{0,1, \cdots q\}$ with probability distribution
	\[ \Pr[T^j=a]= \frac{\| P_{q-a} \cdot M_{\star,j} \|_F^2/a!}{\sum_{b=0}^q \| P_{q-b} \cdot M_{\star,j} \|_F^2/b!}, \]
	where $P_{b}$ for $b=0,1, \cdots q$ are the matrices defined in line~4 and $M$ is the matrix defined in line~2 of the algorithm. The random sample $w$ generated in line~14 of the algorithm is a copy of the random variable $T^{j_l}$. 
	
	For any $j\in[d']$ let $I^{j,w}=(I^j_1,I^j_2, \cdots I^j_w)$ be a vector random variable that takes values in $[d]^w$ with the following conditional probability distribution for every $a=1,2, \cdots w$,
	\begin{align*}
	&\Pr[I^j_a =i | I^j_1=i_1, I^j_2=i_2, \cdots I^j_{a-1}=i_{a-1}]\\
	& = \frac{\|W_{h(i)} \cdot D^{a-1} P_{a+q-w}^\top \|_F^2}{\sum_{t=1}^{s'} \|W_{t}  D^{a-1} P_{a+q-w}^\top \|_F^2}  \frac{\| X_{i,\star} D^{a-1}P_{a+q-w}^\top \|_2^2}{\| X_{h^{-1}(h(i)),\star} D^{a-1}P_{a+q-w}^\top \|_F^2},
	\end{align*}
	where $W_r$ for $r\in[s']$ are the matrices defined in line~11 of the algorithm and $D^{a-1}$ is a diagonal matrix of size $n\times n$ whose diagonal entries are defined as,
	\[ D^{a-1}_{rr} = M_{r,j} \cdot \prod_{b=1}^{a-1} X_{i_b,r}, \]
	for every $a\in[q]$ and $r \in [n]$. For ease of notation we drop the superscript $w$ from $I^{j,w}$ and instead write $I^{j}$.
	It follows that the vector random variable $(i_1,i_2, \cdots i_q)$ obtained by putting together the random indices generated in line~20 of the algorithm, is a copy of the random variable $I^{j_l}$.
	
	Now we are ready to calculate the distribution of the $l^{th}$ row of $S\widetilde{A}$, which we denote by $[S\widetilde{A}]_{l,\star}$. 
	Let $\beta$ be the quantity that the for loop in lines~24-33 of the algorithm computes. If we let $w\in \{ 0,1, \cdots q \}$ be the random number generated in line~14 and if we let $i_1,i_2, \cdots i_q \in[d]$ be the indices sampled in line~20 of the algorithm, then we can compute the value of $\beta$ as follows,
	\[
	\beta = s\sum_{j=1}^{d'} \Pr[J=j] \Pr[T^j=w] \cdot\prod_{b=1}^q p^*_b q^*_b,
	\]
	where the quantities $p^*_b = \frac{\|W_{h(i_b)} \cdot D^{b-1} P_{b+q-w}^\top \|_F^2}{\sum_{t=1}^{s'} \|W_{t}  D^{b-1} P_{b+q-w}^\top \|_F^2}$ and $q^*_b = \frac{\| X_{i_b,\star} D^{b-1}P_{b+q-w}^\top \|_2^2}{\| X_{h^{-1}(h(i_b)),\star} D^{b-1}P_{b+q-w}^\top \|_F^2}$ are computed in lines~28 and 29 of the algorithm.
	Hence, the distribution of $[S\widetilde{A}]_{l,\star}$ is the following,
	\begin{align}
	&\Pr\left[ [S\widetilde{A}]_{l,\star} = \beta^{-1/2} \cdot [\widetilde{A}_w]_{(i_1,i_2, \cdots i_w),\star} \right]\nonumber\\
	&= \sum_{j=1}^{d'} \Pr\left[ \left. [S\widetilde{A}]_{l,\star} = \frac{[\widetilde{A}_w]_{(i_1, \cdots i_w),\star}}{\sqrt{\beta}} \right| w, j \right] \Pr[T^j=w]  p_j \nonumber\\
	&= \sum_{j=1}^{d'} \Pr\left[ I^j=(i_1 , i_2, \cdots i_w) \right] \Pr[T^j=w] \cdot p_j\nonumber\\
	&= \sum_{j=1}^{d'} \Pr[T^j=w]p_j \prod_{a=1}^{w} \Pr \left[I^j_a =i_a | i_1 i_2 \cdots i_{a-1} \right]  \label{dist-sampling-matrix-gauss}
	\end{align}
	where $\Pr \left[I^j_a =i_a | I^j_1=i_1, \cdots I^j_{a-1}=i_{a-1} \right] = p^*_a q^*_a$.  Therefore, $\Pr\left[ [S\widetilde{A}]_{l,\star}= \beta^{-1/2} \cdot [A_w]_{(i_1,i_2, \cdots i_w),\star} \right] = \frac{\beta}{s}$. 
	
	Now note that for any $r\in[s']$, the matrix $W_r$ is defined as $W_r=G_r \cdot X_{h^{-1}(r),\star}$ where $G_r$ is a matrix with i.i.d. Gaussian entries with $n'=C_3q^2\log_2n$ rows. Therefore $G_r$ is a Johnson-Lindenstrauss transform and hence for every $a\in[w]$ and every $r\in[s']$, the following holds with high probability,
	\begin{align}\label{jl-gauss}
	&\|W_{r} D^{a-1} P_{a+q-w}^\top \|_F^2\\ 
	&\qquad \in \left(1 \pm {0.1}/{q}\right) n' \cdot \| X_{h^{-1}(r),\star} D^{a-1} P_{a+q-w}^\top \|_F^2.\nonumber
	\end{align}
	By a union bound over $qs'd'$ events, \eqref{jl-gauss} holds simultaneously for all $a\in[w]$, all $j\in[d']$, and all $r\in[s']$ with high probability. We condition on \eqref{jl-gauss} holding in what follows. Therefore, we can bound the conditional probability of $I^j_a $ as follows,
	\begin{align*}
	&\Pr[I^j_a =i | I^j_1=i_1, I^j_2=i_2, \cdots I^j_{a-1}=i_{a-1}]\\
	& = \frac{\|W_{h(i)} \cdot D^{a-1} P_{a+q-w}^\top \|_F^2}{\sum_{t=1}^{s'} \|W_{t}  D^{a-1} P_{a+q-w}^\top \|_F^2}  \frac{\| X_{i,\star} D^{a-1}P_{a+q-w}^\top \|_2^2}{\| X_{h^{-1}(h(i)),\star} D^{a-1}P_{a+q-w}^\top \|_F^2}\\
	&\ge \frac{(1-\frac{1}{10q}) \| X_{i,\star} D^{a-1}P_{a+q-w}^\top \|_2^2}{(1+\frac{1}{10q}) \|X D^{a-1} P_{a+q-w}^\top \|_F^2}\\
	&\ge \left( 1-\frac{1}{5q} \right) \frac{ \| X_{i,\star} D^{a-1}P_{a+q-w}^\top \|_2^2}{ \|X D^{a-1} P_{a+q-w}^\top \|_F^2}.
	\end{align*}
	
	Now we invoke Lemma \ref{soda-result}. For every $b\in[q]$, $P_b$ is defined as $P_b=Q^q\left( X^{\otimes(q-b)} \otimes E_1^{\otimes b} \right)$, where $Q^q$ is the sketch from Lemma \ref{soda-result} with $\epsilon=\frac{1}{10q}$. We can write for every $i\in[d]$,
	\begin{align*}
	&P_{a+q-w} D^{a-1} X_{i,\star}^\top\\ 
	&\qquad= Q^q\left( X^{\otimes(w-a)} \otimes E_1^{\otimes a+q-w} \right) D^{a-1} X_{i,\star}^\top\\
	&\qquad= Q^q\left( \left(X^{\otimes(w-a)} D^{a-1} X_{i,\star}^\top\right) \otimes {\bf e}_1^{\otimes a+q-w} \right).
	\end{align*}
	Hence, if we invoke Lemma \ref{soda-result} we get that for every $a\in[w]$ and every $i\in[d]$, the following holds with high probability
	\begin{equation}\label{tensorsketch-gauss}
	\left\| X_{i,\star} D^{a-1}P_{a+q-w}^\top \right\|_2^2 \in \frac{\left\| X^{\otimes(w-a)} D^{a-1} X_{i,\star}^\top \right\|_2^2}{1\pm 0.1/q}.
	\end{equation}
	Moreover, 
	\[ P_{q-w} \cdot M_{\star,j}  = Q^q\left( \left(X^{\otimes w} M_{\star,j}\right) \otimes {\bf e}_1^{\otimes q-w} \right) , \]
	and hence, by Lemma \ref{soda-result}, for every $j \in [d']$ and every $w\in \{ 0,1, \cdots q \}$, the following holds with high probability,
	\begin{equation}\label{T-tensor-guarantee}
	\| P_{q-w} \cdot M_{\star,j} \|_F^2 \in \left(1\pm \frac{1}{10q} \right) {\| X^{\otimes w} \cdot M_{\star,j} \|_F^2}.
	\end{equation}
	By union bounding over $(q+1)d'(d+1)$ events we have that with high probability, both \eqref{tensorsketch-gauss} and \eqref{T-tensor-guarantee} hold simultaneously for all $w\in \{ 0,1, \cdots q \}$, all $a\in[w]$, all $j\in[d']$, and all $i\in[d]$. Therefore, conditioning on \eqref{tensorsketch-gauss} and \eqref{T-tensor-guarantee} holding, we have the following two bounds for the conditional probability of $I^j_a $ as well as the conditional probability $T^j$,
	\begin{align*}
	&\Pr[I^j_a =i | I^j_1=i_1, I^j_2=i_2, \cdots I^j_{a-1}=i_{a-1}]\\
	&\qquad\ge \left(1-\frac{2}{5q}\right)  \frac{ \| X^{\otimes(w-a)} D^{a-1} X_{i,\star}^\top \|_2^2}{ \|X^{\otimes(w-a)} D^{a-1} X^\top \|_F^2}.
	\end{align*}
	and,
	\begin{align}
	\Pr[T^j=w] &\in  \frac{\left( 1 \pm {1}/{5q}\right)\| X^{\otimes w} \cdot M_{\star,j} \|_F^2/w!}{\sum_{b=0}^q \| X^{\otimes b} \cdot M_{\star,j} \|_F^2/b!}\nonumber\\
	&= \left( 1 \pm {1}/{5q}\right) \frac{\left\| [\widetilde{A}_w H]_{\star,j} \right\|_2^2}{ \| [\widetilde{A} H]_{\star,j} \|_2^2} \label{prob-T-bound}
	\end{align}
	
	Also we use the following fact that follows from the definition of tensor products and the definition of matrix $D^a$,
	\begin{align*}
	&\left\| X^{\otimes(w-a-1)} D^{a} X^\top \right\|_F^2\\ &\qquad= \left\| X^{\otimes(w-a-1)} D^{a-1}\cdot \text{diag}(X_{i_a,\star}) X^\top \right\|_F^2\\
	&\qquad = \left\| X^{\otimes(w-a)} D^{a-1} X_{i_a,\star}^\top \right\|_2^2
	\end{align*}
	Now we compute the following product of the conditional probabilities
	\begin{align}
	&\prod_{a=1}^{w} \Pr \left[I^j_a =i_a | I^j_1=i_1, \cdots I^j_{a-1}=i_{a-1} \right]\nonumber\\
	&\qquad\ge \prod_{a=1}^{w} \left(1-\frac{2}{5q}\right)  \frac{ \| X^{\otimes(w-a)} D^{a-1} X_{i_a,\star}^\top \|_2^2}{ \|X^{\otimes(w-a)} D^{a-1} X^\top \|_F^2}\nonumber\\
	&\qquad\ge \frac{1}{2} \cdot \prod_{a=1}^{w} \frac{ \| X^{\otimes(w-a)} D^{a-1} X_{i_a,\star}^\top \|_2^2}{ \|X^{\otimes(w-a)} D^{a-1} X^\top \|_F^2}\nonumber\\
	&\qquad = \frac{1}{2} \cdot \frac{ \| X^{\otimes(0)} D^{q-1} X_{i_q,\star}^\top \|_2^2}{ \|X^{\otimes(w-1)} D^{0} X^\top \|_F^2}\nonumber\\
	&\qquad = \frac{1}{2} \cdot \frac{ \left| \langle [X^{\otimes w}]_{(i_1,i_2,\cdots i_w),\star}, M_{\star,j} \rangle \right|^2}{ \|X^{\otimes w} \cdot M_{\star,j} \|_2^2}\nonumber \\
	&\qquad = \frac{1}{2} \cdot \frac{ \left| [\widetilde{A}_w H]_{(i_1,i_2,\cdots i_w),j} \right|^2}{ \|[\widetilde{A}_w H]_{\star,j} \|_2^2} \label{conditional-prob-bound-gauss}
	\end{align}
	By plugging \eqref{prob-T-bound} and \eqref{conditional-prob-bound-gauss} back in 
	\eqref{dist-sampling-matrix-gauss} we get that,
	\begin{align}
	&\Pr\left[  [S\widetilde{A}]_{l,\star} = \beta^{-1/2} \cdot [\widetilde{A}_w]_{(i_1,i_2, \cdots i_w),\star} \right]\nonumber\\
	&\qquad\ge \sum_{j=1}^{d'} \frac{1-\frac{1}{5q}}{2}  \cdot \frac{\| [\widetilde{A}_w H ]_{\star,j} \|_2^2}{ \left\| [\widetilde{A} H]_{\star,j} \right\|_2^2} \frac{ \left| [\widetilde{A}_w H]_{(i_1,i_2,\cdots i_w),j} \right|^2}{ \left\|[\widetilde{A}_w H]_{\star,j} \right\|_2^2} p_j\nonumber\\
	&\qquad= \sum_{j=1}^{d'} \frac{1-\frac{1}{5q}}{2} \cdot \frac{ \left| [\widetilde{A}_w H]_{(i_1,i_2,\cdots i_w),j} \right|^2}{\left\| [\widetilde{A} H]_{\star,j} \right\|_2^2} p_j\label{dist-}
	\end{align}
	Now we bound $p_j$, which is defined in line~6 of the algorithm as follows,
	\begin{align*}
	p_j &= \frac{\|Z_{\star,j}\|_2^2}{\|Z\|_F^2}\\
	&= \frac{\sum_{b=0}^q \|P_{q-b} \cdot M_{\star,j}\|_2^2/b! }{\sum_{b=0}^q \|P_{q-b} \cdot M\|_F^2/b!}\\
	&\ge (1-1/5q) \frac{\sum_{b=0}^q \|X^{\otimes b} \cdot M_{\star,j}\|_2^2/b! }{\sum_{b=0}^q \|X^{\otimes b} \cdot M\|_F^2/b!}\\
	&= (1-1/5q) \frac{\left\| [\widetilde{A} H]_{\star,j} \right\|_2^2 }{\left\|\widetilde{A}H\right\|_F^2},
	\end{align*}
	where the inequality above follows from \eqref{T-tensor-guarantee}. Plugging the above into \eqref{dist-}, we get that,
	\begin{align*}
	&\Pr\left[  [S\widetilde{A}]_{l,\star} = \beta^{-1/2} \cdot [\widetilde{A}_w]_{(i_1,i_2, \cdots i_w),\star} \right]\\
	&\ge \sum_{j=1}^{d'} \frac{1-\frac{2}{5q}}{2} \cdot \frac{ \left| [\widetilde{A}_w H]_{(i_1,i_2,\cdots i_w),j} \right|^2}{\left\| [\widetilde{A} H]_{\star,j} \right\|_2^2} \frac{\left\| [\widetilde{A} H]_{\star,j}\right\|_2^2 }{\left\|\widetilde{A}H\right\|_F^2}\\
	&\ge \frac{1}{3} \cdot \sum_{j=1}^{d'} \frac{ \left| [\widetilde{A}_w H]_{(i_1,i_2,\cdots i_w),j} \right|^2}{\left\|\widetilde{A}H\right\|_F^2}\\
	&= \frac{1}{3} \cdot \frac{ \left\| [\widetilde{A}_w H]_{(i_1,i_2,\cdots i_w),\star} \right\|_2^2}{\left\|\widetilde{A}H\right\|_F^2}
	\end{align*}
	
	Now note that $H$ is a matrix with i.i.d. Gaussian entries with $d' = C_1q\log_2n$ columns, and therefore $H$ is a Johnson-Lindenstrauss transform, and hence for every $ w \in \{0,1, \cdots q \}$ and every $(i_1,i_2,\cdots i_w) \in[d]^w$, with probability $1-\frac{1}{\text{poly}(n^{q+1})}$,
	\[\left\| [\widetilde{A}_w H ]_{(i_1,i_2,\cdots i_w),\star} \right\|_2^2 \in d' \left(1\pm 0.1\right) \left\| [\widetilde{A}_w ]_{(i_1,i_2,\cdots i_w),\star} \right\|_2^2.
	\]
	Therefore, by union bounding over $d^{q+1}$ events, the above holds simultaneously for all $w$ and all $(i_1,i_2,\cdots i_w) \in[d]^w$ with high probability. Therefore,
	\begin{align*}
	&\Pr\left[ [S\widetilde{A}]_{l,\star} = \beta^{-1/2} \cdot [\widetilde{A}_w]_{(i_1,i_2, \cdots i_w),\star} \right]\\
	&\qquad\ge \frac{1}{4} \cdot \frac{ \left\| [\widetilde{A}_w]_{(i_1,i_2,\cdots i_w),\star} \right\|_2^2}{\|A\|_F^2}.
	\end{align*}
	Because $\frac{\beta}{s}$ is the probability of sampling row $(i_1,i_2,\cdots i_w)$ of the $w^{th}$ block of the matrix $\widetilde{A}$, the above inequality proves that with high probability, $S$ is a rank-$s$ row norm sampler for $\widetilde{A}$ as in Definition \ref{def:row-samp}. 
	
	\paragraph{Runtime:} The operations that this algorithm perform largely overlap with that of Algorithm \ref{alg:rotatedrowsampler-poly} with a few additional operations. One of the additional computations in this algorithm is the computation of the matrix $Z$ in line~5 of the algorithm, which takes $O(q^4 n \log^2 n)$ operations. Another additional computational part of the algorithm is the computation of $y_a$, for $a=0,1,\cdots q$, in line~13 of the algorithm, that can be computed in time $O\left( q^3 n \log n \right)$ time for a fixed $l \in[s]$. Therefore, the total time to compute this distribution for all $l$ is $O\left( q^3 s n \log n \right)$. 
	Finally the last additional computation is the computation of the quantity $y^*_w$ in line~25 of the algorithm which takes time $O(q^3 n \log n)$ for a fixed $l \in [s]$ and a fixed $j \in [d']$. Hence the total time of this operation for all $l$ and $j$ is $O\left( q^4 s n \log^2 n \right)$. The total runtime of Algorithm \ref{alg:rowsampler-gauss} is the sum of these terms and the runtime of Algorithm \ref{alg:rotatedrowsampler-poly}, which results in $O\left( qm^2 n \log n + q^{15/2} s^2 n \log^3 n + q^{5/2} \log^3n \cdot \text{nnz}(X) \right)$ runtime.
\end{proof}

\section{Proof of Theorem \ref{thm:gauss}}
Let $q = C (r+\log_2n)$ for a large enough constant $C$. Let $\phi_q$ be the degree-$q$ polynomial lifting for the Gaussian kernel as in Definition \ref{def:poly-feature-Gauss}. Let $\Phi$ be the matrix with $n$ columns whose columns are obtained by applying the lifting $\phi_q$ on the data points, i.e., $\Phi_{\star,i} = \phi_q(x_i)$ for all $i\in[n]$. First of all, note that by Claim \ref{claim:poly-lift-gauss}, since we assumed $\epsilon , \lambda \ge \frac{1}{\text{poly}(n)}$,
\[\|\Phi^\top \Phi - K\|_{op} \le \frac{\epsilon}{2} \lambda.\]
The algorithm finds a spectrally close surrogate for the Gaussian kernel matrix $K$ by invoking the recursive leverage score sampling procedure of Algorithm \ref{alg:outerloop} with inputs $\Phi$, $\lambda$, $\epsilon/2$, and $\mu=O(s_\lambda)$.
For every invocation of the primitive \textsc{RowNormSampler} by Algorithm \ref{alg:outerloop}, we run Algorithm \ref{alg:rowsampler-gauss}, which is especially designed to perform row norm sampling on the Gaussian kernel's polynomial lifting matrix $\Phi$. By Lemma \ref{lem:rowsampler-gauss}, for any $\lambda'>0$, any integers $q,s'$ and any matrices
$X, B$, with high probability, the procedure \textsc{RowNormSampler}$(X,q,B,\lambda',s')$ of Algorithm \ref{alg:rowsampler-gauss} outputs a rank-$s'$ row norm sampler for matrix $\Phi(B^\top B +\lambda' I)^{-1/2}$. Therefore, because the total number of times Algorithm \ref{alg:rowsampler-gauss} is invoked by the recursive leverage score sampling procedure is bounded by $O\left( \log \frac{\|\phi\|_F^2}{\epsilon\lambda} \right) = O\left( \log \frac{{\bf tr}(K)}{\epsilon\lambda} \right) = O(\log n)$, by a union bound, with high probability the preconditions of Lemma \ref{lem:rowsampler-gauss} hold and hence we can invoke this lemma to prove that the sampler $\Pi$ returned by Algorithm \ref{alg:outerloop} satisfies the following with high probability:
\[ \frac{\Phi^\top \Phi + \lambda I}{1+\epsilon/2} \preceq \Phi^\top \Pi^\top \Pi \Phi + \lambda I \preceq \frac{\Phi^\top \Phi + \lambda I}{1-\epsilon/2}.\] 
Therefore, since $\|\Phi^\top \Phi - K\|_{op} \le \frac{\epsilon}{2} \lambda$, this implies,
\[ \frac{K + \lambda I}{1+\epsilon} \preceq \Phi^\top \Pi^\top \Pi \Phi + \lambda I \preceq \frac{K + \lambda I}{1-\epsilon}.\] 

Therefore, if we let $Z=\Pi \Phi$, the theorem follows because $\Pi$ has $s=O\left( \frac{s_\lambda}{\epsilon^2} \log n\right)$ rows. Also because Algorithm~\ref{alg:outerloop} calls the primitive \textsc{RowSampler}$(X,q,B,\lambda',s')$ of Algorithm~\ref{alg:rotatedrowsampler-poly}, $O\left( \log \frac{\|\phi\|_F^2}{\epsilon\lambda} \right) = O\left( \log \frac{{\bf tr}(K)}{\epsilon\lambda} \right) = O(\log n)$ times with inputs $s' = O(\frac{s_\lambda}{\epsilon^2} \log n)$ and matrix $B$, which has $O(s_\lambda \log n)$ rows, each call, by Lemma \ref{lem:rowsampler-gauss}, takes $O\left( {\text{poly}(\epsilon^{-1},q,\log n) \cdot s_\lambda^2n} + q^{5/2} \log^3n \cdot \text{nnz}(X) \right)$ operations. Hence, since $q\approx r+\log n$, the total runtime of the algorithm is $O\left( {\text{poly}(\epsilon^{-1},r,\log n) \cdot s_\lambda^2n} + r^{5/2} \log^4n \cdot \text{nnz}(X) \right)$.

\section{Kernel Ridge Regression}

One of the most elementary and yet powerful kernel methods is {\em Kernel Ridge Regression (KRR)}.
Given training data $(x_1, y_1),\dots,(x_n,y_n)\in \RR^d\times\RR$, a kernel function $k:\RR^d\times\RR^d\to\RR$, and a regularization
parameter $\lambda > 0$, the KRR estimator for a given input $x$ is:
\begin{equation*}
\bar{f}(x) \equiv \sum^n_{j=1} k(x_j, x) \alpha_j
\end{equation*}
where ${\bm \alpha} = ( \alpha_1  \cdots  \alpha_n)^\top$ is the solution to the equation,
\begin{equation}
\label{eq:linear}
(K + \lambda I){\bm \alpha} = {\bf y}.
\end{equation}
In \eqref{eq:linear}, $K\in\RR^{n\times n}$ is the kernel matrix defined by $K_{ij}\equiv k(x_i, x_j)$
and ${\bf y} \equiv [y_1 \cdots y_n]^\top$ is the vector of responses.
The KRR estimator can be derived by minimizing
a regularized squared loss objective function over a hypothesis space defined by the reproducing kernel Hilbert space associated with
$k(\cdot)$. However, the details are not important here.

Suppose that $\phi$ is the lifting corresponding to the kernel function, i.e., $k(x,z) = \langle \phi(x), \phi(z) \rangle$. Let $\Phi$ be the matrix with $n$ columns which is obtained by applying the lifting $\phi$ on the dataset, i.e.,  $\Phi_{\star,i} = \phi(x_i)$. Then, Theorems \ref{thm:poly} and \ref{thm:gauss} approximate the kernel matrix $K$ by finding a sampling matrix $\Pi$ such that $\Phi^\top \Pi^\top \Pi \Phi \approx \Phi^\top \Phi = K$. This corresponds to approximating the kernel function $k(\cdot)$ by $\tilde{k}(x,z) = \langle \Pi\phi(x), \Pi\phi(z) \rangle$. Therefore, the approximate KRR estimator for a given input $x$ is,
\[ \tilde{f}(x) \equiv \sum^n_{j=1} \tilde{k}(x_j, x) \tilde{\alpha}_j = \langle {\bf w}, \Pi \phi(x) \rangle, \] 
where the vector ${\bf w}$ is obtained by solving the equation,
\[ (\Pi \Phi \Phi^\top \Pi^\top + \lambda I){\bf w} = \Pi \Phi{\bf y}. \]
The above equation can be solved much faster than \eqref{eq:linear} since the sampling matrix $\Pi$ has a small number $s \approx \frac{s_\lambda}{\epsilon^2} \log n$ of rows.
\subsection{Risk Bounds}
\label{sec:risk-bounds}

One way to analyze our approximate KRR estimator is via risk bounds. Several recent papers on approximate KRR use
such analysis~\cite{bach2013sharp, alaoui2015fast,musco2017recursive,avron2017random}.
In particular, these papers
consider the fixed design setting and upper bound the expected in-sample predication error of
the KRR estimator $\bar{f}$,
considering it as an empirical estimate of the statistical risk. More precisely, the underlying
assumption is that $y_i$ satisfies
\begin{equation}
\label{eq:stat-model}
y_i = f^*(x_i) + \nu_i
\end{equation}
for some $f^\star:\RR^d \to \RR$. The $\{\nu_i\}$'s are i.i.d noise terms, distributed as normal variables
with variance $\sigma^2_\nu$. The empirical risk of an estimator $f$, which measures the quality of the estimator, is defined as
\[\mathcal{R}(f) \equiv \EE_{\{\nu_i\}} \left[{\frac{1}{n}\sum^n_{j=1}\left| f(x_i) - f^*(x_i) \right|^2}\right].\]

Let ${\bf f} \in \RR^n$ be the vector whose $j^{th}$ entry is $f^*(x_j)$.
It is straightforward to show that for the KRR estimator $\bar{f}$ we have~\cite{bach2013sharp, alaoui2015fast,avron2017random}:
\begin{align*}
\mathcal{R}(\bar{f}) & = n^{-1}\lambda^2 {\bf f}^\top (K + \lambda I)^{-2} {\bf f} \\
& \quad + n^{-1}\sigma_\nu^2 \cdot {\bf tr}\left({K^2(K + \lambda I)^{-2}}\right).
\end{align*}
Since $\lambda^2 {\bf f}^\top (K + \lambda I)^{-2} {\bf f} \leq  \lambda {\bf f}^\top (K + \lambda I)^{-1} {\bf f}$
and ${\bf tr}\left({K^2(K + \lambda I)^{-2}}\right) \leq {\bf tr}\left({K(K + \lambda I)^{-1}}\right) = s_\lambda$, where $s_\lambda$ is the statistical dimension of the kernel matrix $K$.
We define,
$$\widehat{\mathcal{R}}_{K}({\bf f}) \equiv  n^{-1}\lambda {\bf f}^\top (K + \lambda I)^{-1} {\bf f} + n^{-1}\sigma_\nu^2 \cdot s_\lambda$$
and note that $\mathcal{R}(\bar{f}) \leq \widehat{\mathcal{R}}_{K}({\bf f})$. The first term in the above expressions
for $\mathcal{R}(\bar{f})$ and $\widehat{\mathcal{R}}_{K}({\bf f})$ is frequently referred to as \emph{bias},
while the second term is the \emph{variance}.

\begin{lemma}
	\label{lem:risk-bound}
	Suppose that~\eqref{eq:stat-model} holds, and let ${\bf f} \in \RR^n$ be the vector whose $j^{th}$ entry is $f^*(x_j)$.
	Let $\bar{f}$ be the KRR estimator, and let $\tilde{f}$ be the KRR estimator
	obtained using some other kernel $\tilde{k}(\cdot,\cdot)$, whose kernel matrix is $\widetilde{K}$.
	Suppose that $\widetilde{K}$ is an $(\epsilon,\lambda)$-spectral approximation to $K$ as in \eqref{spectral-bound} for some $\epsilon<1$,
	and that ${\|K\|_{op}} \geq 1$. The following bound holds:
	\begin{align}\label{estimateUpperBound}
	\mathcal{R}(\tilde{f}) \leq (1-\epsilon)^{-1} \widehat{\mathcal{R}}_{K}({\bf f}) + \frac{\epsilon}{1+\epsilon}\cdot\frac{\text{rank}{(\widetilde{K})}}{n}\cdot\sigma^2_\nu
	\end{align}
\end{lemma}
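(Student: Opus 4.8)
The plan is to reproduce the standard bias/variance split of the KRR risk, now with the surrogate kernel matrix $\widetilde K$ in place of $K$, and to control each piece using the spectral sandwich \eqref{spectral-bound} through elementary operator‑monotonicity facts. First I would compute $\mathcal{R}(\tilde f)$ explicitly: the approximate estimator evaluated at the training inputs is the vector $\widetilde K(\widetilde K+\lambda I)^{-1}{\bf y}$, so under the model \eqref{eq:stat-model} the prediction‑error vector is $-\lambda(\widetilde K+\lambda I)^{-1}{\bf f}+\widetilde K(\widetilde K+\lambda I)^{-1}{\bm \nu}$. Taking the expectation over the i.i.d. mean‑zero noise kills the cross term and, exactly as in the displayed formula for $\mathcal{R}(\bar f)$, leaves $\mathcal{R}(\tilde f)=n^{-1}\lambda^2{\bf f}^\top(\widetilde K+\lambda I)^{-2}{\bf f}+n^{-1}\sigma_\nu^2\,{\bf tr}(\widetilde K^2(\widetilde K+\lambda I)^{-2})$. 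It therefore suffices to bound the bias part $\lambda^2{\bf f}^\top(\widetilde K+\lambda I)^{-2}{\bf f}$ by $(1-\epsilon)^{-1}\lambda{\bf f}^\top(K+\lambda I)^{-1}{\bf f}$ and the variance part ${\bf tr}(\widetilde K^2(\widetilde K+\lambda I)^{-2})$ by $(1-\epsilon)^{-1}s_\lambda+\tfrac{\epsilon}{1+\epsilon}\,\text{rank}(\widetilde K)$; these two recombine with the coefficients $n^{-1}\lambda{\bf f}^\top(K+\lambda I)^{-1}{\bf f}$ and $n^{-1}\sigma_\nu^2 s_\lambda$ of $\widehat{\mathcal R}_K({\bf f})$ to give exactly \eqref{estimateUpperBound}.

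For the bias part I would use only the lower half of \eqref{spectral-bound}. Since $\widetilde K\succeq 0$ we have $\lambda(\widetilde K+\lambda I)^{-1}\preceq I$, and conjugating by $(\widetilde K+\lambda I)^{-1/2}$ gives $\lambda^2(\widetilde K+\lambda I)^{-2}\preceq\lambda(\widetilde K+\lambda I)^{-1}$. Inverting $\widetilde K+\lambda I\succeq (K+\lambda I)/(1+\epsilon)$ yields $(\widetilde K+\lambda I)^{-1}\preceq(1+\epsilon)(K+\lambda I)^{-1}\preceq(1-\epsilon)^{-1}(K+\lambda I)^{-1}$, and chaining the two bounds shows $\lambda^2(\widetilde K+\lambda I)^{-2}\preceq(1-\epsilon)^{-1}\lambda(K+\lambda I)^{-1}$; sandwiching with ${\bf f}$ finishes this piece.

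For the variance part, put $\widetilde P:=\widetilde K(\widetilde K+\lambda I)^{-1}$, so that $0\preceq\widetilde P\preceq I$, $\text{rank}(\widetilde P)=\text{rank}(\widetilde K)=:\rho$, and the term equals ${\bf tr}(\widetilde P^2)$. The upper half of \eqref{spectral-bound} gives $\widetilde K\preceq (K+\epsilon\lambda I)/(1-\epsilon)$, and operator monotonicity of $t\mapsto t/(t+\lambda)$ turns this into $\widetilde P\preceq K(K+\lambda I)^{-1}+\epsilon\lambda(K+\lambda I)^{-1}=:R$. Since $\widetilde P\succeq 0$ and $R-\widetilde P\succeq 0$ we get ${\bf tr}(\widetilde P^2)\le{\bf tr}(\widetilde P R)$, which I would split as ${\bf tr}(\widetilde P\,K(K+\lambda I)^{-1})+\epsilon\lambda\,{\bf tr}(\widetilde P(K+\lambda I)^{-1})$. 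Using $\widetilde P\preceq I$, the first term is at most ${\bf tr}(K(K+\lambda I)^{-1})=s_\lambda\le(1-\epsilon)^{-1}s_\lambda$. For the second term I would pass to the range of $\widetilde K$: writing $\Pi$ for the orthogonal projection onto that rank‑$\rho$ subspace, we have $\widetilde P=\Pi\widetilde P\Pi\preceq\Pi$ as well as $\widetilde P\preceq R$, and combining these with the hypothesis $\|K\|_{op}\ge 1$ to handle the leftover $\lambda(K+\lambda I)^{-1}\Pi$ factor should bound $\epsilon\lambda\,{\bf tr}(\widetilde P(K+\lambda I)^{-1})$ by $\tfrac{\epsilon}{1+\epsilon}\rho$; together with the first term this is the claimed variance bound.

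The crux — and the only nonroutine step — is this last estimate. A direct computation, expanding in the eigenbasis of $K$ and bounding each $w^\top\widetilde P w$ by $\min(\pi_w,r_w)$, only produces $\epsilon\rho$ rather than $\tfrac{\epsilon}{1+\epsilon}\rho$, so the sharper constant has to be extracted by exploiting simultaneously that (i) $\widetilde P\preceq R$ makes $\widetilde P$ small precisely in the low‑$K$‑eigenvalue directions where $(K+\lambda I)^{-1}$ is large, (ii) $\widetilde P$ has rank only $\rho$, and (iii) $\|K\|_{op}\ge 1$; balancing these three constraints so the arithmetic lands exactly on $\tfrac{\epsilon}{1+\epsilon}$ is where essentially all the effort goes, with everything else being standard operator‑inequality bookkeeping.
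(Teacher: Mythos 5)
Your bias bound is fine and essentially the paper's: \eqref{easyBiasBound} is exactly your chain $\lambda^2(\widetilde K+\lambda I)^{-2}\preceq\lambda(\widetilde K+\lambda I)^{-1}\preceq(1-\epsilon)^{-1}\lambda(K+\lambda I)^{-1}$. The gap is in the variance term, and you have named it yourself: your route via $\widetilde P\preceq R$ and ${\bf tr}(\widetilde P^2)\le{\bf tr}(\widetilde P R)\le s_\lambda+\epsilon\,{\bf tr}(\widetilde P)\le s_\lambda+\epsilon\,\mathrm{rank}(\widetilde K)$ only yields the constant $\epsilon$, and your plan for upgrading it to $\tfrac{\epsilon}{1+\epsilon}$ is a speculation, not an argument. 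Worse, the ingredient you are counting on is a dead end: the paper's proof never uses the hypothesis $\|K\|_{op}\ge 1$ at all (it is vestigial, inherited from the prior work this lemma is taken from), so "balancing" rank, $\widetilde P\preceq R$ and $\|K\|_{op}\ge 1$ is not where the factor $\tfrac{1}{1+\epsilon}$ comes from. Since the improved constant is the only nontrivial content of the variance bound, the proof as proposed is incomplete at its crux.

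What the paper actually does is more elementary and works one eigenvalue at a time. It first discards the square, ${\bf tr}\bigl(\widetilde K^2(\widetilde K+\lambda I)^{-2}\bigr)\le{\bf tr}\bigl(\widetilde K(\widetilde K+\lambda I)^{-1}\bigr)=s_\lambda(\widetilde K)$, then writes $s_\lambda(\widetilde K)=\rho-\sum_{i=1}^{\rho}\frac{\lambda}{\lambda_i(\widetilde K)+\lambda}$ with $\rho=\mathrm{rank}(\widetilde K)$, uses the eigenvalue comparison $\lambda_i(\widetilde K)+\lambda\le(1+\epsilon)\bigl(\lambda_i(K)+\lambda\bigr)$ coming from the spectral sandwich via Courant--Fischer, and then pads the sum from $\rho$ to $n$ terms (each added term $1-\frac{\lambda}{\lambda_i(K)+\lambda}$ is nonnegative). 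The identity $1-\frac{1}{1+\epsilon}=\frac{\epsilon}{1+\epsilon}$ together with $\frac{\lambda}{\lambda_i(K)+\lambda}\le1$ is precisely what produces $s_\lambda(\widetilde K)\le s_\lambda(K)+\frac{\epsilon}{1+\epsilon}\rho\le(1-\epsilon)^{-1}s_\lambda(K)+\frac{\epsilon}{1+\epsilon}\rho$; no projection onto the range of $\widetilde K$ and no norm lower bound on $K$ are needed. (One caveat in your favor: with the paper's exact normalization of \eqref{spectral-bound}, the upper half only gives $\lambda_i(\widetilde K)+\lambda\le\frac{\lambda_i(K)+\lambda}{1-\epsilon}$, under which the same padding argument yields the constant $\epsilon$ you obtained; the $\frac{\epsilon}{1+\epsilon}$ constant corresponds to the $(1\pm\epsilon)$-multiplicative form of spectral approximation used in the work the paper follows. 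So your instinct that $\epsilon\rho$ is what falls out naturally is reasonable, but to prove the stated bound you must use the eigenvalue-wise comparison with the $(1+\epsilon)$ factor, not the trace manipulation you set up.)
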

\begin{proof}
	For the bias term we have:
	\begin{align}\label{easyBiasBound}
	{\bf f}^\top (\widetilde{K} + \lambda I)^{-1} {\bf f} \leq (1-\epsilon)^{-1}{\bf f}^\top (K + \lambda I)^{-1} {\bf f}.
	\end{align}
	
	We now consider the variance term. Denote $s = \text{rank}({\widetilde{K}})$, and let
	$\lambda_1(A) \geq \lambda_2(A) \geq \dots \geq \lambda_n(A)$ denote the eigenvalues
	of a matrix $A$. We have:
	\begin{align*}
	s_\lambda(\widetilde{K})  &= {\bf tr}\left({(\widetilde{K} + \lambda I)^{-1}\widetilde{K}}\right)\\ 
	&= \sum^s_{i=1}\frac{\lambda_i(\widetilde{K})}{\lambda_i(\widetilde{K}) + \lambda} \\
	&= s -  \sum^s_{i=1}\frac{\lambda}{\lambda_i(\widetilde{K}) + \lambda} \\
	&\leq s - (1+\epsilon)^{-1} \sum^s_{i=1}\frac{\lambda}{\lambda_i(K) + \lambda} \\
	&= s -  \sum^s_{i=1}\frac{\lambda}{\lambda_i(K) + \lambda} + \frac{\epsilon}{1+\epsilon}\sum^s_{i=1}\frac{\lambda}{\lambda_i(K) + \lambda} \\
	&\leq n -  \sum^n_{i=1}\frac{\lambda}{\lambda_i(K) + \lambda} + \frac{\epsilon\cdot s}{1+\epsilon} \\
	&= s_\lambda(K) + \frac{\epsilon\cdot s}{1+\epsilon} \\
	&\leq (1-\epsilon)^{-1}s_\lambda(K) + \frac{\epsilon\cdot s}{1+\epsilon}
	\end{align*}
	where we use the fact that ${A} \preceq {B}$ implies that $\lambda_i(A) \leq \lambda_i(B)$
	(this is a simple consequence of the Courant-Fischer minimax theorem).
	
	Combining the above variance bound with the bias bound in \eqref{easyBiasBound} yields:
	$$\widehat{\mathcal{R}}_{\tilde{K}}({\bf f}) \leq (1-\epsilon)^{-1} \widehat{\mathcal{R}}_{K}({\bf f}) + \frac{\epsilon}{(1+\epsilon)}\cdot\frac{\text{rank}({\widetilde{K}})}{n}\cdot\sigma^2_\nu$$
	and the bound $\mathcal{R}(\tilde{f}) \leq \widehat{\mathcal{R}}_{\widetilde{K}}({\bf f})$ completes the
	proof.
	
\end{proof}

In short, Lemma \ref{lem:risk-bound} bounds the risk of the approximate KRR estimator as a function of both the risk upper bound $\widehat{\mathcal{R}}_{K}({\bf f})$ in \eqref{estimateUpperBound} and an additive term which is small if the rank of $\text{rank}({\widetilde{K}})$ and/or $\epsilon$ is small. In particular, it is instructive to compare the additive term $\frac{\epsilon}{1+\epsilon} \cdot n^{-1}\sigma_\nu^2 \cdot \text{rank}({\widetilde{K}})$ to the variance term $n^{-1}\sigma_\nu^2 \cdot s_\lambda$. Since the approximation $\widetilde{K}$ is only useful computationally if $\text{rank}({\widetilde{K}}) \ll n$, we should expect the additive term in \eqref{estimateUpperBound} to also approach $0$ and generally be small when $n$ is large.

\end{document}